\newtheorem{lemma}{Lemma}
\newtheorem{remark}{Remark}
\def\phi{\varphi}
\def\({\left(}
\def\){\right)}
\def\b0{{\mathbf{0}}}
\definecolor{mygold}{RGB}{247,173,69}
\def\bf#1{\mathbf{#1}}
\begin{document}	
	\pagestyle{empty}
	\title{\huge Rethinking Multi-User Communication in Semantic Domain: Enhanced OMDMA by Shuffle-Based Orthogonalization and Diffusion Denoising}
	\author{Maojun Zhang, {\it Student Member, IEEE}, Guangxu Zhu, {\it Member, IEEE}, Xiaoming Chen, {\it Senior Member, IEEE}, Kaibin Huang, {\it Fellow, IEEE}, Zhaoyang Zhang, {\it Senior Member, IEEE} 
	
	\thanks{

		M. Zhang, X. Chen, Z. Zhang are with the College of Information Science and Electronic Engineering, Zhejiang University, Hangzhou, China (Email: $\{$zhmj, chen\_xiaoming$\}$@zju.edu.cn.). 
	G. Zhu is with Shenzhen Research Institute of Big Data, Shenzhen, China (Email: gxzhu@sribd.cn). 
	Kaibin Huang is with The University of Hong Kong, Hong Kong (email: huangkb@eee.hku.hk). 
	}}
	\maketitle
	\thispagestyle{empty}
	\begin{abstract}
		Inter-user interference remains a critical bottleneck in wireless communication systems, particularly in the emerging paradigm of semantic communication (SemCom). Compared to traditional systems,  inter-user interference in SemCom severely degrades key semantic information, often causing worse performance than Gaussian noise under the same power level. To address this challenge, inspired by the  recently proposed concept of Orthogonal Model Division Multiple Access (OMDMA) that leverages {semantic orthogonality} rooted in the personalized joint source and channel (JSCC) models to distinguish users, we propose a novel, scalable framework that eliminates the need for user-specific JSCC models as did in original OMDMA. Our key innovation lies in {shuffle-based orthogonalization}, where randomly permuting the positions of JSCC feature vectors transforms inter-user  interference into Gaussian-like noise. By assigning each user a unique shuffling pattern, the interference is treated as channel noise, enabling effective mitigation using diffusion models (DMs). This approach not only simplifies system design by requiring a single universal JSCC model but also enhances privacy, as shuffling patterns act as implicit private keys. Additionally, we extend the framework to scenarios involving semantically correlated data. By grouping users based on semantic similarity, a cooperative beamforming strategy is introduced to exploit redundancy in correlated data, further improving system performance. Extensive simulations demonstrate that the proposed method outperforms state-of-the-art multi-user SemCom frameworks, achieving superior semantic fidelity, robustness to interference, and scalability—all without requiring additional training overhead. 
	\end{abstract}
	\begin{IEEEkeywords}
		semantic communication, multi-user transmission, beamforming, diffusion model, multiple access
	\end{IEEEkeywords}
\section{Introduction}
Over the past decades, wireless communication systems have undergone transformative advancements, evolving from the first generation (1G) to the fifth generation (5G). These systems have traditionally relied on a modular transmission pipeline, where each module—such as source coding, channel coding, and signal modulation—is independently optimized. Decades of academic and industrial efforts have pushed these modules close to their theoretical performance limits. However, the explosive growth of data-intensive applications, including augmented/virtual reality (AR/VR) and the widespread transmission of high-resolution images and videos, has imposed unprecedented demands on communication systems. To meet these challenges, a paradigm shift toward \emph{semantic communication} (SemCom) is emerging. Unlike conventional systems, SemCom focuses on semantic-level accuracy by extracting and transmitting the meaning embedded in data, enabling more intelligent and efficient allocation of communication resources. By ensuring the reliable delivery of critical semantic information, SemCom has the potential to surpass the efficiency and adaptability of traditional communication systems, opening new frontiers in wireless communication \cite{gunduz2022beyond,qin2025neural,lan2021semantic}. 
\subsection{SemCom under Point-to-Point Transmission Scenarios}
As an emerging communication paradigm, SemCom has demonstrated promising effectiveness, particularly in point-to-point scenarios \cite{xu2023deep,zhang2024unified}. Unlike conventional systems that employ separate source and channel coding, modern SemCom leverages deep learning to achieve joint source-channel coding (DeepJSCC), integrating source compression and error correction within a unified neural network-based encode. Therefore, overall communication systems can be optimized end-to-end, leading to superior performance. The concept of DeepJSCC for wireless image transmission was first explored in \cite{bourtsoulatze2019deep}, where a convolutional neural network (CNN)-based JSCC architecture was shown to outperform separation-based schemes over additive white Gaussian noise channels. Subsequent advances incorporated transformer-based architectures \cite{dai2022nonlinear,wu2022channel}, further enhancing system performance. Despite these advancements, DeepJSCC systems depend heavily on neural networks, which typically require sufficient training before deployment. This reliance underscores the urgent need for robust adaptivity to varying transmission conditions, such as changes in rate and channel quality. To address this, the authors in \cite{xu2021wireless} introduced attention mechanisms to embed signal-to-noise ratio (SNR) information into the JSCC encoder and decoder, facilitating adaptive performance across different levels of SNR. Additional strategies include masking feature vectors to reduce communication overhead \cite{yang2022deep}. A more refined rate adjustment strategy was proposed in \cite{zhang2023predictive}, where each feature vector was transmitted with a different rate based on its importance measured by entropy. More recent work has jointly considered channel and rate adaptation by projecting both SNR and rate information into the encoder and decoder \cite{yang2024swinjscc}, allowing the system to dynamically adjust features based on image content and channel conditions. 

In addition, the performance of DeepJSCC-based SemCom systems can be further enhanced through the integration of generative models at the receiver \cite{erdemir2023generative}. Generative models are designed to synthesize realistic data by learning the underlying data distribution. The learned data distribution can serve as valuable prior knowledge, transforming unconditional reconstruction into a distribution-conditioned process and thereby enhancing overall transmission performance. Among various  generative models, diffusion models (DMs) have recently achieved notable success in visual data  generation. Specifically, DMs learn the conditional distribution of data over a progressively noisier latent space during training stage \cite{ho2020denoising}. At inference, DMs iteratively remove noise, gradually reconstructing a high-quality data instance from a noisy input. In the context of DM-aided SemCom, initial works employed DMs to refine images reconstructed by DeepJSCC \cite{yilmaz2023high,chen2024commin}, yielding perceptual performance gains. Then, the authors in \cite{wu2024cddm} observed a fundamental similarity between the diffusion process and the additive white Gaussian noise (AWGN) channel, inspiring the use of DMs for channel noise denoising. In this framework, the channel output is first matched into an appropriate diffusion state determined by the experienced channel SNR, after which a DM is utilized to iteratively denies the received signal. Given the inherent computational cost of iterative denoising, subsequent studies introduced consistency distillation strategies to reduce the number of diffusion steps while maintaining performance \cite{pei2024latent}. Furthermore, recognizing the stochastic nature of generative models, recent work proposed incorporating semantic information into the diffusion process to further enhance denoising efficacy and semantic preservation \cite{zhang2025semantics}.
\subsection{SemCom under Multi-user Transmission Scenarios}
Building on the demonstrated effectiveness of DeepJSCC-based SemCom in point-to-point scenarios, a natural question arises: \emph{does this superiority and adaptability persist in multi-user or multiple access transmission scenarios?} In the multi-user settings, both the channel-induced noise and inter-user interference need to be addressed. While the former can be managed similar to the point-to-point case, the latter introduces even more significant challenges to the current DeepJSCC framework. Specifically, inter-user interference, manifested as JSCC streams generated from other users’ diverse and potentially heterogeneous data, remains difficult to characterize mathematically due to the black-box nature of neural networks and the variability in data distributions.
Moreover, inter-user interference in SemCom has been observed to severely degrade essential semantic content, often resulting in worse performance than Gaussian noise under the same power level \cite{liang2024orthogonal}. 
To address these challenges, prior work \cite{li2023non} proposed a two-user non-orthogonal multiple access (NOMA) SemCom framework, where user transmit data drawn from different distributions. In this approach, the data of each user is separately encoded, superimposed for transmission, and then separately decoded, with joint training of both the JSCC models yielding satisfactory performance. This framework was subsequently extended to scenarios with more users by jointly training their DeepJSCC encoder-decoder pairs, accounting for both channel noise and mutual JSCC feature interference \cite{zhang2023deepma}. Despite these advances, significant practical limitations persist. Notably, these methods depend on joint training to achieve signal orthogonality, which constrain flexibility as the models need to be retrained or fine-tuned whenever the number of users or data distributions change. The orthogonal model division multiple access (OMDMA) framework alleviated the dependence on joint training by exploiting separately-trained JSCC models \cite{liang2024orthogonal}. For a single modality multi user transmission case, separately-trained models have different parameters, which leads to a heterogeneous feature space, thereby can be leveraged to distinguish users. Despite its effectiveness, it requires user-specific JSCC models, which requires frequent upload/download  of these JSCC models, limiting its flexibility. Recently, diffusion models have been introduced for inter-user interference cancellation in multi-user SemCom \cite{wu2025icdm}, allowing the use of only a single JSCC model. However, this approach typically requires training an additional DM specifically for interference cancellation, resulting in increased training complexity and still persisting flexibility challenges. 

Motivated by these issues, this paper proposes a practical OMDMA scheme that requires {\emph{only a unified JSCC model and a diffusion model, both trained in a point-to-point setting}}, to support multi-user SemCom transmission. Our experimental observations reveal that shuffling JSCC features—randomly permuting their locations, renders them statistically indistinguishable from Gaussian noise. Inspired by this, we propose assigning each user a unique shuffling pattern. 
Consequently, at the receiver, JSCC features from other users appear as disrupted versions of the originals, which can be effectively modeled as additional channel noise and mitigated using the existing DM. Furthermore, 
we also discuss the cooperative design to accommodate scenarios where users transmit semantically correlated data. 
\begin{figure*}[h] 
    \centering
    \includegraphics[width=1\linewidth]{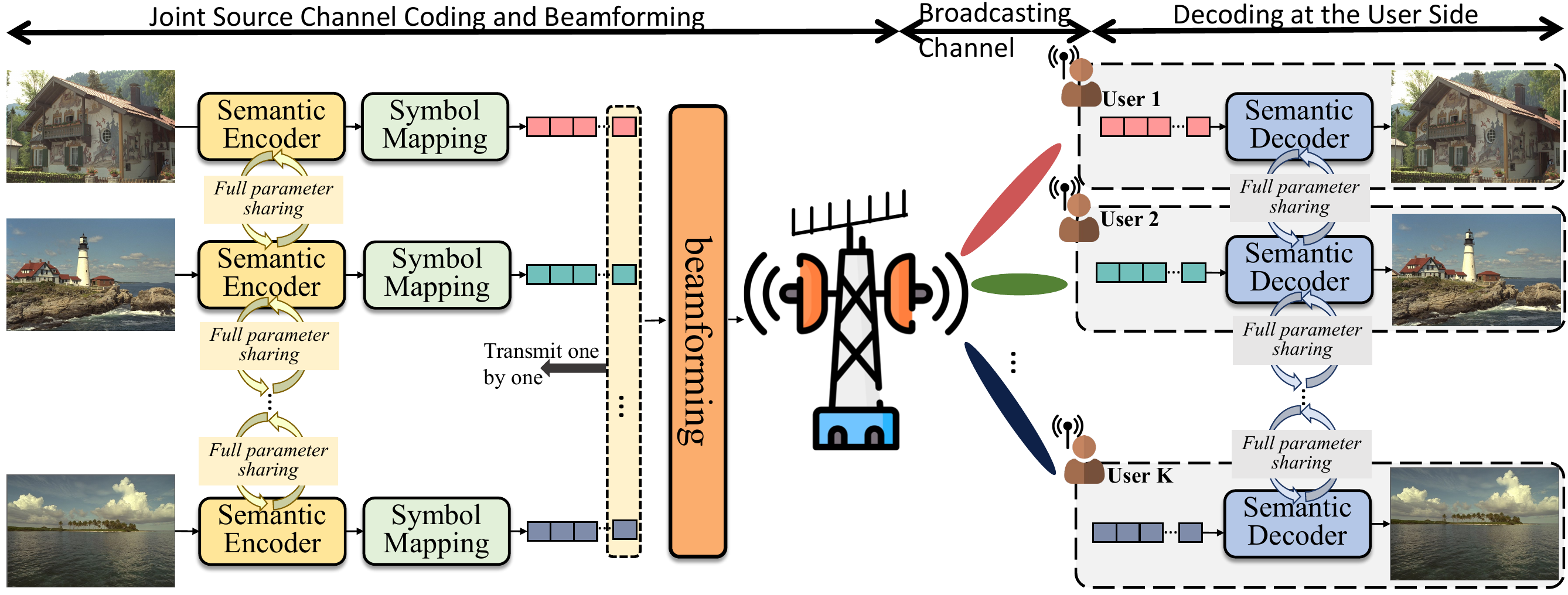}
    \vspace{0.5mm}
    \caption{{Multiuser Semantic Communication Framework.}}
    \vspace{-4mm}
    \label{Fig:multiuser framework}
\end{figure*}

The main contributions of this work are summarized as follows. 
\begin{itemize}
    \item{\textbf{Unified Multi-User SemCom Framework}}: 
    We propose a comprehensive downlink multi-user SemCom framework that eliminates the need for user-specific DeepJSCC models. This is achieved by integrating key techniques, including shuffle-based orthogonalization for user differentiation, diffusion denoising for interference mitigation, semantic-aware beamforming for SINR enhancement. Together, these components allow a shared JSCC encoder-decoder and diffusion model trained on point-to-point scenarios, thus enabling efficient,  scalable, and robust multi-user SemCom with minimal training overhead.
    \item{\textbf{Shuffle-Based Orthogonalization for Distinguishing Users}}: 
    To distinguish users and mitigate inter-user interference, we introduce a shuffle-based orthogonalization technique. By assigning each user a unique shuffling pattern, the structural dependencies in interfering semantic signals are disrupted, transforming interference into Gaussian-like noise. This not only simplifies interference handling but also enhances privacy, as the shuffling patterns act as implicit private keys.
    \item {\textbf{Diffusion Denoising and Semantic-Aware Beamforming for Interference Mitigation}}: 
We integrate diffusion models as robust denoisers to mitigate both channel noise and transformed inter-user interference. The denoising mechanism adapts to the current signal-to-interference-plus-noise ratio (SINR) by matching the received signal to an appropriate diffusion state. At the transmitter side, a semantic-aware beamforming strategy is employed to further improve SINR by aligning transmission directions with semantic objectives. Together, these techniques ensure superior reconstruction quality and semantic fidelity, even in interference-dominant environments.
    \item {\textbf{Cooperative Transmission for Semantically Correlated Data}}: We extend the framework to scenarios where users transmit semantically correlated data. We introduce a cooperative transmission strategy that involves grouping users based on semantic similarity and optimizing beamforming weights to leverage semantic redundancy, further improving transmission efficiency. 
    \item {\textbf{Extensive Performance Validation}}: We conduct extensive simulations to evaluate the proposed framework under various scenarios. Results demonstrate significant performance gains in reconstruction fidelity, semantic preservation, and scalability compared to state-of-the-art multi-user SemCom methods, all while maintaining minimal computational and training overhead.  
\end{itemize}
The remainder of this paper is organized as follows. Section \ref{sec: system model and problem formulation} introduces the system model and problem formulation. Section \ref{sec: transceiver design for semantics-uncorrelated data} describes the transceiver design for semantics-uncorrelated data, covering inter-user interference analysis, diffusion denoising, and beamforming. Section \ref{sec: transmitting semantics-correlated data} extends the framework to semantically correlated data. Section \ref{sec: numerical results} provides numerical results validating the effectiveness of the proposed method, and Section \ref{sec: conclusion} concludes the paper.

\section{System Model and Problem Formulation}\label{sec: system model and problem formulation}
In this section, we first present the multi-user SemCom framework, then we formulate the overall optimization problem for the joint design of semantic orthogonalization and multi-user transmission.
\subsection{Multi-User Semantic Communication Framework}
We consider a single-cell downlink multi-user multiple-input single-output (MU-MISO) system, as illustrated in Fig. \ref{Fig:multiuser framework}. In this scenario, a BS is equipped with $N_t$ antennas, serving $K$ users simultaneously, each equipped with a single antenna. We consider the task of wireless image transmission. Specifically, for user $k$, the image to be received is denoted by $\mathbf{I}_k \in \mathbb{R}^{3\times H \times W}$, with $H$ and $W$ being the height and width of the image, respectively. In the subsequent sections, we detail the tranceiver model. 
\subsubsection{Transmitter}
At the transmitter side, 
the set of images $\{\mathbf{I}_k\}$ designated for transmission to users is first processed by  a DeepJSCC-based 
semantic encoder module for JSCC. Specifically, the semantic encoder maps the input image $\mathbf{I}_k$ to a latent feature $\mathbf{f}_k\in \mathbb{R}^{2N}$ through the function $\mathcal{F}(\cdot)$ parameterized by $\boldsymbol{\Theta}$, such that 
\begin{align}
    {\mathbf{f}_k}=\mathcal{F}(\mathbf{I}_k;\boldsymbol{\Theta}). 
\end{align}
Moreover, $\mathcal{F}(\cdot)$ is designed to ensure that the output $\mathbf{f}_k$ satisfies the average power constraint, i.e., $\frac{1}{N}\mathbb{E}_{\mathbf{f}}[\|\mathbf{f}\|_2^2]\leq \frac{1}{2}$. 
Note that, we assume that the BS employs a unified semantic encoder to process all the images, i.e., the same encoding function $\mathcal{F}(\cdot)$ and the same parameters $\boldsymbol{\Theta}$ are applied across all users. 
Correspondingly, each user utilizes an identical semantic decoder to recover the transmitted image from the received signal. 
This approach not only simplifies the system architecture but also facilitates efficient training and deployment of the semantic encoder/decoder modules.

Then, $\mathbf{f}_k$ is assembled into a channel symbol sequence $\mathbf{z}_k\in \mathbb{Z}^{N}$, which is given below: 
\begin{align}\label{eq: real to complex mapping}
    \mathbf{z}_k = \mathcal{C}_k(\mathbf{f}_k), 
\end{align}
where $\mathcal{C}_k(\cdot):\mathbb{R}^{2N}\rightarrow \mathbb{C}^{N}$ denote the mapping function that converts the real-valued vector $\mathbf{f}_k$ into a complex-valued channel symbol sequence $\mathbf{z}_k$. 
This mapping function usually keeps a straightforward form in existing works, such as $[\mathcal{C}_k(\mathbf{f}_k)]_i=[\mathbf{f}_k]_i+j[\mathbf{f}_k]_{i+N}$. However, We will revisit this mapping function and emphasize that the mapping stage is not merely a technical necessity but can play a pivotal role in realizing semantic orthogonality. 

Following the mapping process, the BS applies a precoding operation to the channel symbols. We assume that both the BS and all users possess perfect knowledge of the channel state information (CSI), and that the CSI remains constant during the transmission of each image. 
The channel symbols are transmitted sequentially over time slots, with each time slot corresponding to a single channel symbol. 
For the $n$-th channel symbol associated with user $k$, denoted by $[\mathbf{z}_k]_n$, the precoding and the aggregation process can be expressed as follows:
\begin{align}
    [\mathbf{w}]_n=\sum_{k=1}^K \mathbf{v}_k[\mathbf{z}_k]_n,
\end{align}
where $\mathbf{v}_k \in \mathbb{C}^{Nt\times 1}$ and $\mathbf{x}\in \mathbb{Z}^{N}$ denote the precoding vector of the $k$-th user and the transmit signal vector, respectively. 

\subsubsection{Receiver}
The $k$-th user receives the channel output $\mathbf{y}_k \in \mathbb{Z}^{N}$, the $n$-th element of which is given by 
\begin{align}\label{eq: received signal}
    [\mathbf{y}_k]_n=\mathbf{h}_k^H\mathbf{v}_k[\mathbf{z}_k]_n+\sum_{m=1,m\neq k}^K\mathbf{h}_k^H\mathbf{v}_m[\mathbf{z}_m]_n + \sigma n, 
\end{align}
where $\mathbf{h}_k\in \mathbb{C}^{N_t \times 1}$ denotes the MISO channel from the BS to user $k$, and $n\in \mathcal{CN}(0,1)$ represents the additive Gaussian noise, with $\sigma^2$ being the noise power. 

Then, $\mathbf{y}_k$ undergoes the phase compensation and a reverse mapping operation with respect to $\mathcal{C}_k(\cdot)$ in (\ref{eq: real to complex mapping}), yeilding the recovered latent feature $\hat{\mathbf{f}}_k$, which is given by
\begin{align}
    \hat{\mathbf{f}}_k = \mathcal{C}_k^{-1}(e^{-j\phi_{k,k}}\mathbf{y}_k),
\end{align}
where $\mathcal{C}_k^{-1}(\cdot)$ denotes the inverse mapping function, which indicates $\mathcal{C}^{-1}_k(\mathcal{C}_k(\mathbf{f}_k))=\mathbf{f}_k$. $\phi_{k,k}$ denotes the phase term of $\mathbf{h}_k^H\mathbf{v}_k$, i.e., $\mathbf{h}_k^H\mathbf{v}_k=|\mathbf{h}_k^H\mathbf{v}_k|e^{j\phi_{k,k}}$. 

After that, the recovered latent feature $\hat{\mathbf{f}}_k$ is fed into the neural network based decoder to recover the image $\hat{\mathbf{I}}_k$ as follows:
\begin{align}
    \hat{\mathbf{I}}_k = \mathcal{D}(\hat{\mathbf{f}}_k;\boldsymbol{\Phi}),
\end{align}
where $\mathcal{D}(\cdot)$ is the decoder function parameterized by $\boldsymbol{\Phi}$, which is designed to reconstruct the image from the latent feature.
Similar to transmitter, $\mathcal{D}(\cdot)$ and $\boldsymbol{\Phi}$ are shared across all users. 
\subsection{Problem Formulation}
As illustrated in Fig. \ref{Fig:multiuser framework}, the multi-user semantic communication system 
suffers from noise and inter-user interference, degrading the quality of the recovered image. In this paper, we aim to address this issue 
by jointly optimizing the semantic encoder/decoder modules and the linear beamforming vectors $\{\mathbf{v}_k\}$, while adhering to power constraints and ensuring that all users utilize a common semantic encoder-decoder pair. 
The optimization problem we consider can be formulated as follows: 
\begin{subequations}
    \begin{align}\label{eq: original optimization problem}
        \max_{\mathcal{F}(\cdot),\{\mathcal{C}_k\},\mathcal{D}(\cdot),\{\mathbf{v}_k\},\boldsymbol{\Theta},\boldsymbol{\Phi}}~~~&\sum_{k=1}^K\mathcal{S}(\mathbf{I}_k,\hat{\mathbf{I}}_k),\\
        {\rm s.t.}~~~~~~~~~~~~~~~& \sum_{k=1}^K\|\mathbf{v}_k\|_2^2\leq P_T,
        \end{align}
where $\mathcal{S}(\cdot)$ denotes the reconstruction quality metric between the original image $\mathbf{I}_k$ and the reconstructed image $\hat{\mathbf{I}}_k$, which is designed to capture the semantic fidelity of the reconstructed image.
$P_T$ denotes the power constraint of beamforming vectors. 
\end{subequations}
\section{Transceiver Design for Transmitting Semantics-Uncorrelated Data}\label{sec: transceiver design for semantics-uncorrelated data}
We first consider a  scenario 
where the BS transmits distinct image, each embedding unique semantic information, to multiple users. This setup aligns with the conventional multiple access framework, wherein the signals from other users are viewed as interference and need to be carefully addressed. We begin by comprehensively analyzing the impact of inter-user interference  on reconstruction performance. The insights found inspires a shuffle-based mapping method that enables us to transform user-interference into Gaussian-like noise, which is then mitigated by a diffusion denoiser.  
Building on this, we further optimize the transmit beamforming at the BS side. This integrated approach significantly enhances the reconstruction fidelity of images for each user in the presence of both noise and interference.
\subsection{Understanding and Mitigating Inter-User Interference}\label{sec: user interference analysis}
\begin{figure}[t] 
	\centering
    \subfigure[Visualization of t-SNE Projection of the latent features, with each point denoting a specific feature vector.]{
		\label{Fig:mapping :a} 
		\includegraphics[width=1\linewidth]{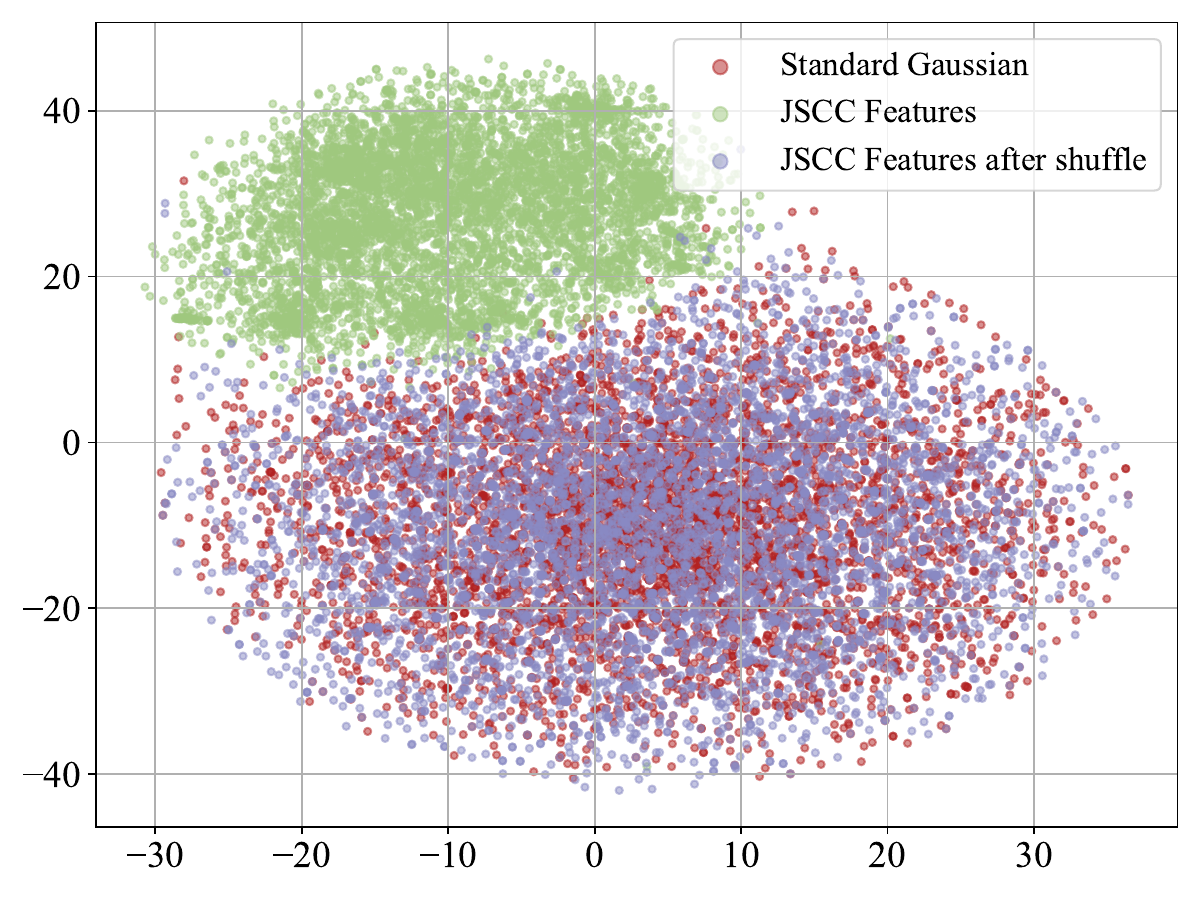}}
		\subfigure[Performance comparison under different interference types.]{
			\label{Fig:mapping :b} 
			\includegraphics[width=1\linewidth]{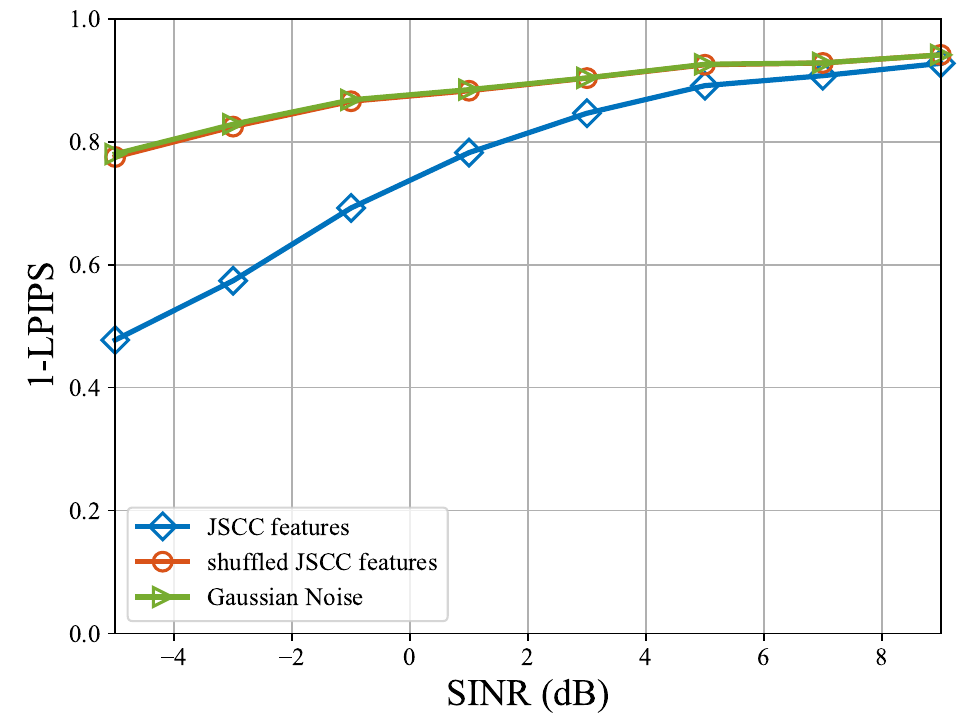}}	
	\caption{Analysis of the impact of Inter-User Interference.}
	\label{Fig: outage evaluation}
\end{figure}
In conventional beamforming design, the transmitted data $s_{k,t}$ (data transmitted to user $k$ at the $t$-th time slot) is assumed to be independent across different time slot and users, which indicates that $\mathbb{E}[s_{k,t}s_{k,t}^H]=1$, $\mathbb{E}[s_{k,t}s_{k,t’}^H]=0$, $\mathbb{E}[s_{k,t}s_{m,t}^H]=0$. Under these assumptions, inter-user interference can be equivalently treated as Gaussian noise. However, as we will show, this premise does not hold in the context of DeepJSCC based semantic communication system. 

To explore the impact of Gaussian noise and inter-user interference on reconstruction performance, we first conduct a comprehensive analysis. Specifically, we employ the JSCC model outlined in \cite{zhang2025semantics}, leveraging it to extract latent features from approximately 20,000 images within the SA-1B dataset \cite{kirillov2023segment}.  These features are  treated as 20,000 samples drawn from the latent distribution $p(\mathbf{f})$. We then leverage these samples to approximate the distribution gap between $p(\mathbf{f})$ and a Gaussian distribution $\mathcal{N}(0,\mathbf{I})$. 

For this evaluation, we apply t-distributed Stochastic Neighbor Embedding (t-SNE) \cite{van2008visualizing}, a statistical technique that models high-dimensional objects as points in lower dimensions, ensuring that similar objects are placed near each other while dissimilar objects are positioned farther apart with high probability. We then plot the two-dimensional representations of each sample from the two distributions in Fig. \ref{Fig:mapping :a}, where points belonging to different distributions are depicted in distinct colors. The results indicate a significant deviation of $p(\mathbf{f})$ from the Gaussian distribution, highlighting that inter-user interference cannot be simplistically treated as Gaussian noise. 
This finding is corroborated by the performance evaluation of inter-user interference and noise on reconstruction quality, as illustrated in Fig. \ref{Fig:mapping :b}, where we use the learned perceptual image patch similarity (LPIPS) as the 
reconstruction performance. In this assessment, we consider a two-user system, employing the transmission framework method in \cite{zhang2025semantics}, 
we introduce the JSCC features or Gaussian noise to the target latent feature at the same power level, and then reconstruct the target image using the DM and JSCC decoder.  
The results demonstrate that inter-user interference significantly degrades performance and is substantially more \emph{detrimental} than Gaussian noise. This underscores the necessity of carefully addressing inter-user interference in the design of multi-user semantic communication systems. 

Building on the above results, we observe that the distribution heterogeneity between $p(\mathbf{f})$ and Gaussian distribution mainly stems from the fact that the latent feature $\mathbf{f}_k$ intrinsically preserves the structural information of the source image. While this structure is essential for accurately reconstructing the target image,  it becomes a detrimental attacker signal that attacks the semantic information behind image when originating from inter-user interference. This insight motivates us to distrupt the internal dependencies and the structure of the interfering latent feature. Interestingly, for any given function $\mathbf{f}_k$, if we interpret each element as an independent sample from a underlying distribution, we obtain $2N$ samples from an unknown distribution. 
By plotting the empirical distribution of these $2N$ samples, we observe that it closely resembles a standard Gaussian distribution. 
Inspired by this empirical observation, we propose a shuffling-based mapping scheme in the symbol mapping process to further mitigate the structural impact of inter-user interference. 

Specifically, the mapping process is divided into two stages: shuffling and combination. In the shuffle stage, a unique permutation of indices, denoted by $\mathbf{p}_k=[p_{k,1},…,p_{k,2N}]$, is generated for each user $k$. Each index $p_{k,i}$ specifies the new position of the element originally located at index $i$ in $\mathbf{f}_k$, resulting the shuffled vector  $\mathbf{f}_{k}^{\rm shuffle}=[[\mathbf{f}_k]_{p_{k,1}},…,[\mathbf{f}_k]_{p_{k,2N}}]$. In the subsequent combination stage, the real-valued shuffled vector is transformed into a complex-valued vector by pairing elements. Therefore, the mapping function in (\ref{eq: real to complex mapping}) can be explicitly expressed as
\begin{align}
    [\mathcal{C}_k(\mathbf{f}_k)]_i=[\mathbf{f}_k]_{p_{k,i}}+j[\mathbf{f}_k]_{p_{k,i+N}}.
\end{align}
Correspondingly, the inverse mapping function $\mathcal{C}_k^{-1}(\cdot)$ can be expressed as
    \begin{align}
        \mathcal{C}_k^{-1}(\mathbf{y}_k)=[[\mathbf{y}'_k]_{p_{k,1}^{-1}},...,[\mathbf{y}'_k]_{p_{k,2N}^{-1}}], 
    \end{align}
    where $\mathbf{y}_k'=[[\Re\{\mathbf{y}_k\}]^T,[\Im\{\mathbf{y}_k\}]^T]^T$, and $\Re(\cdot)$ and $\Im(\cdot)$ denote the real and imaginary parts of a complex vector, respectively. $p_{k,i}^{-1}=\operatorname*{arg\,min}\limits_{j}|p_{k,j}-i|$ denotes the reverse shuffle index for rearranging the shuffled vectors to its original order. 

\begin{remark}
With the proposed mapping method, for $k\neq m$, $\mathcal{C}_k^{-1}(\mathcal{C}_m(\mathbf{f}))$ effectively results in a shuffled version of $\mathbf{f}$. 
As illustrated in Fig. \ref{Fig:mapping :a}, the empirical distribution of $\mathcal{C}_k^{-1}(\mathcal{C}_m(\mathbf{f}))$ is indistinguishable with the Gaussian noise. This observation is further substantiated by the reconstruction performance evaluation in Fig. \ref{Fig:mapping :b}, where the impact of inter-user interference, after applying our mapping scheme, closely matches that of additive Gaussian noise. These results demonstrate the efficacy of the shuffle-based mapping method in transforming structured user interference into Gaussian noise. Despite its simplicity, the proposed shuffle-based mapping scheme offers substantial benefits for multi-user SemCom systems. Specifically, it enables us to tackle the inter-user interference similar to channel noise, thereby simplifying the subsequent beamforming design and facilitating the application of diffusion models for denoising. Furthermore, the mapping scheme inherently enhances privacy, as non-target users receive signals indistinguishable from Gaussian noise, effectively precluding the recovery of any semantic information.
\end{remark}
\subsection{Diffusion Denoising for Noise and Interference Control}
As shown in eq. (\ref{eq: received signal}), the received signal is a mixture of target signal, inter-user interference, and channel noise, with the proposed mapping functions described in Section \ref{sec: user interference analysis}, we have the following lemma. 
\begin{lemma}\label{lemma: received latent feature}
    Given a set of mapping functions $\{\mathcal{C}_k(\cdot)\}$ that satisfies $\mathcal{C}_k^{-1}(\mathcal{C}_m(\mathbf{f}))\sim\mathcal{N}(0,\mathbf{I})$, and $\mathcal{C}_k^{-1}(\mathcal{C}_k(\mathbf{f}))=\mathbf{f}$,  the received latent feature at user $k$ can be expressed as follows: 
    \begin{align}\label{eq: expression of received latent feature}
    \hat{\mathbf{f}}_k=\alpha_k\mathbf{f}_k+\tau_k\mathbf{n}_r,
    \end{align}
    where $\alpha_k=|\mathbf{h}_k^H\mathbf{v}_k|$ denotes the target signal power, and $\tau_k=\sqrt{\sum_{m=1,m\neq k}|\mathbf{h}_k^H\mathbf{v}_m|^2+\sigma^2}$ denotes the aggregated interference and  noise power. 
\end{lemma}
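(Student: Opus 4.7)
The plan is to propagate the received signal expression in \eqref{eq: received signal} through the phase compensation and the inverse mapping $\mathcal{C}_k^{-1}(\cdot)$ symbol-by-symbol, then exploit the two hypotheses on $\mathcal{C}_k^{-1}\circ \mathcal{C}_m$ to collapse the target, interference, and noise contributions into the stated form. First, I would substitute $\mathbf{z}_m = \mathcal{C}_m(\mathbf{f}_m)$ into \eqref{eq: received signal} and write $\mathbf{h}_k^H\mathbf{v}_k = \alpha_k e^{j\phi_{k,k}}$, so that after multiplying by $e^{-j\phi_{k,k}}$ the phase-compensated signal becomes
\begin{align}
e^{-j\phi_{k,k}}\mathbf{y}_k \;=\; \alpha_k\,\mathcal{C}_k(\mathbf{f}_k) \;+\; \sum_{m\neq k} \beta_{k,m}\,\mathcal{C}_m(\mathbf{f}_m) \;+\; \sigma\,\tilde{\mathbf{n}},
\end{align}
where $\beta_{k,m} \bydef e^{-j\phi_{k,k}}\mathbf{h}_k^H\mathbf{v}_m$ and $\tilde{\mathbf{n}} \bydef e^{-j\phi_{k,k}}\mathbf{n}\sim\mathcal{CN}(\mathbf{0},\mathbf{I})$ by the rotational invariance of the complex circularly-symmetric Gaussian.

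Next, since $\mathcal{C}_k^{-1}$ is defined by first splitting into real/imaginary parts and then applying a fixed (inverse) permutation, it is $\mathbb{R}$-linear on complex vectors. I would apply this linearity to split $\hat{\mathbf{f}}_k = \mathcal{C}_k^{-1}(e^{-j\phi_{k,k}}\mathbf{y}_k)$ into three pieces. The target piece simplifies immediately using $\mathcal{C}_k^{-1}(\mathcal{C}_k(\mathbf{f}_k)) = \mathbf{f}_k$ to yield $\alpha_k \mathbf{f}_k$. The noise piece $\sigma\,\mathcal{C}_k^{-1}(\tilde{\mathbf{n}})$ is real-Gaussian with covariance $\sigma^2\mathbf{I}$: the real/imaginary concatenation of $\tilde{\mathbf{n}}$ is i.i.d.\ $\mathcal{N}(0,1)$ (absorbing the $1/2$ variance per component into the standard $\mathcal{CN}$ convention used here) and a permutation preserves this.

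The main obstacle is the interference piece $\sum_{m\neq k}\mathcal{C}_k^{-1}(\beta_{k,m}\mathcal{C}_m(\mathbf{f}_m))$, because $\beta_{k,m}$ is a generic complex scalar, not a real one, so it mixes the real and imaginary coordinates of $\mathcal{C}_m(\mathbf{f}_m)$ before the inverse permutation acts. I would handle this by writing $\beta_{k,m}=|\beta_{k,m}|e^{j\theta_{k,m}}$ and noting that multiplying $\mathcal{C}_m(\mathbf{f}_m)$ by $e^{j\theta_{k,m}}$ induces an orthogonal $2\times 2$ rotation on each $(\Re,\Im)$ pair of its entries. Under the hypothesis that $\mathcal{C}_k^{-1}(\mathcal{C}_m(\mathbf{f}))\sim\mathcal{N}(\mathbf{0},\mathbf{I})$ (which already assumes isotropy/exchangeability of the shuffled coordinates), such orthogonal mixing leaves the distribution invariant, so $\mathcal{C}_k^{-1}(e^{j\theta_{k,m}}\mathcal{C}_m(\mathbf{f}_m))\sim\mathcal{N}(\mathbf{0},\mathbf{I})$ as well; pulling out the real magnitude $|\beta_{k,m}|$ via $\mathbb{R}$-linearity then gives a $\mathcal{N}(\mathbf{0},|\beta_{k,m}|^2\mathbf{I})$ contribution, with $|\beta_{k,m}|=|\mathbf{h}_k^H\mathbf{v}_m|$.

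Finally, I would invoke (approximate) independence of the $K$ interference summands together with the channel noise---each generated from a different user's data and an independent Gaussian---and add the Gaussians, yielding a single real Gaussian term of variance $\sum_{m\neq k}|\mathbf{h}_k^H\mathbf{v}_m|^2+\sigma^2 = \tau_k^2$. Writing this composite Gaussian as $\tau_k\mathbf{n}_r$ with $\mathbf{n}_r\sim\mathcal{N}(\mathbf{0},\mathbf{I})$ and combining with the target term delivers \eqref{eq: expression of received latent feature}, completing the argument.
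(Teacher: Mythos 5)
Your proposal follows essentially the same route as the paper's proof: the same three-way decomposition of $\hat{\mathbf{f}}_k$ via the $\mathbb{R}$-linearity of $\mathcal{C}_k^{-1}$, the same collapse of the target term using $\mathcal{C}_k^{-1}(\mathcal{C}_k(\mathbf{f}_k))=\mathbf{f}_k$, and the same Gaussian aggregation of the interference and noise pieces into $\tau_k\mathbf{n}_r$. The only real difference is how the residual phase on each interference term is handled: you invoke rotational invariance of the isotropic Gaussian under the orthogonal $(\Re,\Im)$ rotation induced by $e^{j\theta_{k,m}}$, whereas the paper expands $e^{j(\phi_{k,m}-\phi_{k,k})}$ into $\cos$ and $\sin$ components (via the skew matrix $\mathbf{P}$) and adds their variances --- your phrasing is, if anything, the cleaner justification of that step, since the two components depend on the same $\mathbf{f}_m$ and the correct reason the variances add is precisely that $\cos\theta\,\mathbf{I}+\sin\theta\,\mathbf{P}$ is orthogonal.
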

\begin{proof}
    First, observe that $\mathcal{C}_k^{-1}$ is linear, i.e., for any vectors $\mathbf{a}$ and $\mathbf{b}$, and any scalar $x\in \mathbb{R}$,  the following properties hold:
    $\mathcal{C}_k^{-1}(\mathbf{a}+\mathbf{b})=\mathcal{C}_k^{-1}(\mathbf{a})+\mathcal{C}_k^{-1}(\mathbf{b})$, and $\mathcal{C}_k(x\mathbf{a})=x\mathcal{C}_k(\mathbf{a})$. 
   
   Given these properties, the received latent feature at user $k$ can be decomposed as follows: 
   \begin{align}
   \hat{\mathbf{f}}_k&=\underbrace{\mathcal{C}_k^{-1}(e^{-j\phi_{k,k}}\mathbf{h}_k^H\mathbf{v}_k\mathcal{C}_k(\mathbf{f}_k))}_{\mathbf{f}_{k,target}}\notag\\&+\underbrace{\sum_{m=1,m\neq k}^K \mathcal{C}_k^{-1}(e^{-j\phi_{k,k}}\mathbf{h}_k^H\mathbf{v}_m\mathcal{C}_m(\mathbf{f}_m))}_{\mathbf{f}_{k,int}}+\underbrace{\mathcal{C}_k^{-1}(\sigma e^{-j\phi} \mathbf{n}_{r})}_{\mathbf{f}_{k,noise}}.
   \end{align}
   For the desired signal component $\mathbf{f}_{k,target}$, noting that $\mathbf{h}_k^H\mathbf{v}_k=|\mathbf{h}_k^H\mathbf{v}_k|e^{j\phi_{k,k}}$, we have 
   \begin{align}
   \mathbf{f}_{k,target}&=\mathcal{C}_k^{-1}(|\mathbf{h}_k^H\mathbf{v}_k|\mathcal{C}_k(\mathbf{f}_k))=|\mathbf{h}_k^H\mathbf{v}_k|\mathcal{C}_k^{-1}(\mathcal{C}_k(\mathbf{f}_k))=\alpha_k\mathbf{f}_k.
   \end{align}
   For the interference component $\mathbf{f}_{k,int}$, denote $\mathbf{h}_k^H\mathbf{v}_m$ as $\mathbf{h}_k^H\mathbf{v}_m=|\mathbf{h}_k^H\mathbf{v}_m|e^{j\phi_{k,m}}$, then we have 
   \vspace{-3mm}
   \begin{align}
   \mathbf{f}_{k,int}&=\sum_{m=1,m\neq k}^K\mathcal{C}_k^{-1}(e^{j(\phi_{k,m}-\phi_{k,k})}|\mathbf{h}_k^H\mathbf{v}_m|\mathcal{C}_m(\mathbf{f}_m))\\&=\sum_{m=1,m\neq k}^K |\mathbf{h}_k^H\mathbf{v}_m|\mathcal{C}_k^{-1}(e^{j(\phi_{k,m}-\phi_{k,k})}\mathcal{C}_m(\mathbf{f}_m))\\
   &=\sum_{m=1,m\neq k}^K{\bigg[}\underbrace{\cos(\phi_{k,m}-\phi_{k,k})|\mathbf{h}_k^H\mathbf{v}_m|\mathcal{C}_{k}^{-1}(\mathcal{C}_m(\mathbf{f}_m))}_{(a)}\notag\\&+\underbrace{\sin(\phi_{k,m}-\phi_{k,k})|\mathbf{h}_k^H\mathbf{v}_m|\mathcal{C}_{k}^{-1}j(\mathcal{C}_m(\mathbf{f}_m))}_{(b)}{\bigg]}
   \end{align}
   Given that $\mathcal{C}_k^{-1}(\mathcal{C}_m(\mathbf{f}_m))\sim\mathcal{N}(0,\mathbf{I})$, the first term $(a)$ follows the distribution of $\mathcal{N}(0,\cos^2(\phi_{k,m}-\phi_{k,k})|\mathbf{h}_k^H\mathbf{v}_m|^2\mathbf{I})$. For the second term, according to the definition of $\mathcal{C}_k$ and $\mathcal{C}_k^{-1}$, we have $\mathcal{C}_k^{-1}(j\mathcal{C}_m(\mathbf{f}_m))=\mathbf{P}\mathcal{C}_k^{-1}(\mathcal{C}_m(\mathbf{f}_m))$, 
   where $\mathbf{P}=\begin{bmatrix} \mathbf{0}&-\mathbf{I}_{N}\\\mathbf{I}_N&\mathbf{0} \end{bmatrix}$. 
   Therefore, we have $\mathcal{C}_k^{-1}(j\mathcal{C}_m(\mathbf{f}_m))\sim \mathcal{N}(\mathbf{P}\mathbf{0},\mathbf{P}\mathbf{I}\mathbf{P}^T)=\mathcal{N}(\mathbf{0},\mathbf{I})$, indicating that the second term $(b)$ also contributes a Gaussian component  that follows the distribution of $\mathcal{N}(0,\sin^2(\phi_{k,m}-\phi_{k,k})|\mathbf{h}_k^H\mathbf{v}_m|^2\mathbf{I})$. 
   
   By the independence and additivity of Gaussian components, the aggregated interference $\mathbf{f}_{k,int}$ is distributed as $\mathcal{N}(\mathbf{0},\mathbf{I}\sum_{m=1,m\neq k}^K|\mathbf{h}_k^H\mathbf{v}_m|^2)$. 
   Similarly, we have $\mathbf{f}_{k,noise}\sim \mathcal{N}(0,\sigma^2\mathbf{I})$ due to the invariance of the mapping to Gaussian noise. 
   
   Combining the interference and noise, we can conclude that the overall received latent feature at user $k$ can be succinctly expressed as (\ref{eq: expression of received latent feature}), which ends the proof. 
\end{proof}

Observing (10), it is apparent that the received latent feature is expressed as a weighted sum of the desired signal component and a Gaussian noise term. This structure closely resembles an intermediate state within the diffusion process, wherein signal and noise are linearly combined according to specific weights. Such an interpretation motivates the adoption of diffusion-based models for denoising, as demonstrated in the point-to-point single-antenna communication scenario \cite{zhang2025semantics}. In particular, the diffusion framework comprises two primary phases: the forward process and the reverse process.

\subsubsection{Forward process} Given a data sample $\mathbf{x}_0 \sim p(\mathbf{x})$, the forward diffusion process incrementally perturbs $\mathbf{x}_0$ by sequentially adding Gaussian noise with a prescribed variance at each timestep. After a sufficient number of steps, this process yields a sample that is approximately distributed as standard Gaussian noise, i.e.,  $\mathbf{x}_1 \sim \mathcal{N}({\mathbf{0},\mathbf{I}})$. Let $t$ denote the current timestep, corresponding to a particular noise level. The noisy observation at timestep $t$ can be written as 
\begin{align}\label{eq: forward process}
    \mathbf{x}_t=\sqrt{1-\bar{\beta}_t}\mathbf{x}_0+\sqrt{\bar{\beta}_t}\mathbf{n},
\end{align}
where $\bar{\beta}_t$ denotes the cumulative noise variance at timestep $t$, and $\mathbf{n}\sim\mathcal{N}(\mathbf{0},\mathbf{I})$ denotes the standard Gaussian noise. 
Moreover, based on (\ref{eq: forward process}), the conditional distribution of $p(\mathbf{x}_t|\mathbf{x}_s)$ for any $t>s$ remains Gaussian and is given by 
\begin{align}
    \mathbf{x}_t=\sqrt{\frac{1-\bar{\beta}_t}{1-\bar{\beta}_s}}\mathbf{x}_s+\sqrt{\bar{\beta}_t-\bar{\beta}_s\frac{1-\bar{\beta}_t}{1-\bar{\beta}_s}}\mathbf{n}.
\end{align}
\subsubsection{Reverse process}
In the reverse process, the diffusion model operates as a denoiser, aiming to reconstruct the cleaner latent feature  $\mathbf{x}_s$ from a noisier observation $\mathbf{x}_t$. To facilitate this, we consider a deterministic sampling scheme as proposed in \cite{song2020denoising}. Specifically, given both  $\mathbf{x}_t$ and the original latent feature $\mathbf{x}_0$, $\mathbf{x}_s$ can be expressed as  
\begin{align}\label{eq: reverse process}
\mathbf{x}_s=\sqrt{1-\bar{\beta}_s}\mathbf{x}_0+\sqrt{\bar{\beta}_s} \frac{\mathbf{x}_t-\sqrt{1-\bar{\beta}_t}\mathbf{x}_0}{\sqrt{\bar{\beta}_t}}. 
\end{align}
It can be found that the second term in (\ref{eq: reverse process}) follows the distribution of $\mathcal{N}(0,\bar{\beta}_s\mathbf{I})$, which ensures that the overall distribution of $\mathbf{x}_s$ aligns with the forward process formulation in (\ref{eq: forward process}).

To obtain $\mathbf{x}_s$ from $\mathbf{x}_t$, the information from the corresponding $\mathbf{x}_0$ is required. However, since $\mathbf{x}_0$ is unknown at the inference time, a neural network, denoted as $\boldsymbol{\Omega}$, is introduced to estimate $\mathbf{x}_0$ from $\mathbf{x}_t$. The training objective for $\boldsymbol{\Omega}$ is to minimize the mean square error between the predicted and actual clean latent feature, which can be formulated as 
\begin{align}\label{eq: training objective}
\operatorname*{arg\,min}_{\boldsymbol{\Omega}}~~\mathbb{E}_{\mathbf{x}_0\sim p(\mathbf{x})}\mathbb{E}_{t}\|\boldsymbol{\epsilon}_{\boldsymbol{\Omega}}(\sqrt{1-\bar{\beta}_t}\mathbf{x}_0+\sqrt{\bar{\beta}_t}\mathbf{n},\bar{\beta}_t)-\mathbf{x}_0\|_2^2, 
\end{align}
where $\boldsymbol{\epsilon}_{\boldsymbol{\Omega}}(\cdot)$ denotes the output of the neural prediction network parameterized by $\boldsymbol{\Omega}$. Notably, both the noisy latent feature and the corresponding noise power $\bar{\beta}_t$ are provided as inputs to the network. This design enables adaptive denoising capabilities, allowing the network to dynamically adjust to varying noise levels and enhance the reconstruction quality of the latent feature.
\subsubsection{Step Matching}
Once the diffusion model has been trained, it is capable of performing denoising starting from any arbitrary noise level. Given the received latent feature as characterized in Lemma \ref{lemma: received latent feature}, we first apply power normalization to the input, scaling it by the factor $\frac{1}{\sqrt{\alpha_k^2+\tau_k^2}}$ to ensure consistent energy across samples. To determine the appropriate starting timestep for the diffusion process, we perform a matching between the observed noise power and the predefined diffusion noise schedule. Specifically, the initial timestep $t_{start}$ is selected as follows. 
\begin{align}\label{eq: step matching}
    t_k^{start}=\operatorname*{arg\,min}_{t}\left|\frac{\tau_k}{\sqrt{\alpha_k^2+\tau_k^2}}-\sqrt{\bar{\beta}_t}\right|,
\end{align} 
This step ensures that the denoising process is initialized at a noise level that accurately reflects the observed signal conditions. Subsequently, iterative denoising is performed on the normalized latent feature using the reverse diffusion update as described in (\ref{eq: reverse process}). The complete denoising procedure is summarized in Algorithm \ref{algo:sampling algorithm for diffusion}.
\begin{figure}[t]
	\label{alg:LSB_}
	\begin{algorithm}[H]
	  \caption{Denoising algorithm at the $k$-th user side}\label{algo:sampling algorithm for diffusion}
	  \begin{algorithmic}[1]
		\renewcommand{\algorithmicrequire}{ \textbf{Input:}}
		\REQUIRE {the received latent feature $\hat{\mathbf{f}}_k$,} channel and beamforming vector, $\{\mathbf{h}_k\}$, $\{\mathbf{v}_k\}$, number of steps $T$.
        \STATE Calculate the signal and noise power by $\alpha_k=|\mathbf{h}_k^H\mathbf{v}_k|$, $\tau_k=\sqrt{\sum_{m=1,m\neq k}^K|\mathbf{h}_k^H\mathbf{v}_m|^2+\sigma^2}$.\\
  \STATE {\textbf{Step matching:}} Calculate the current timestep $t_k^{start}$ with (\ref{eq: step matching}). 
		\STATE {\textbf{Initialize:}} $t=t_k^{start}$, $\mathbf{x}_{t}=\frac{1}{\sqrt{\alpha_k^2+\tau_k^2}}\hat{\mathbf{f}}_k$. 
		\WHILE{$t>0$} 
		\STATE $s=t-\frac{m}{T}$. 
		\STATE $\mathbf{x}_{s}=\sqrt{\frac{\bar{\beta}_{s}}{\bar{\beta}_t}}\mathbf{x}_t+\bigg(\sqrt{1-\bar{\beta}_{s}}-\sqrt{\frac{\bar{\beta}_{s}(1-\bar{\beta}_{t})}{\bar{\beta}_t}}\bigg)\boldsymbol{\epsilon}_{\boldsymbol{\Omega}}(\mathbf{x}_t,\bar{\beta}_{t}).$
		\STATE $t=s$. 
		\ENDWHILE
        \STATE {\textbf{Output:}} The denoised latent feature $\bar{\mathbf{f}}_k=\mathbf{x}_{0}$, where $\mathbf{x}_0$ is the final denoised latent feature.
	  \end{algorithmic}
	\end{algorithm}
	\vspace{-3mm}
\end{figure}
\subsection{Semantic Performance-Aware Beamforming Design}\label{subsec: beamforming design}
So far, we present the overall solution framework for transmitting semantics-uncorrelated data in Fig. \ref{Fig:solution framework for semantics uncorrelated data}. 
In this subsection, we focus on the beamforming design at the BS side, which aims to maximize the overall reconstruction performance across all users as in (\ref{eq: original optimization problem}). 
We adopt the JSCC models and diffusion models proposed in \cite{zhang2025semantics}. 
Let $S(\gamma)=\mathbb{E}_{\mathbf{I}}[\mathcal{S}(\mathbf{I},\bar{\mathbf{I}})]$ denote the expected reconstruction performance, where $\mathbf{I}$ denote a specific image to be transmitted, and $\bar{\mathbf{I}}$ denote the reconstructed image after transmission with the JSCC model over an AWGN channel with $\gamma$ being the signal-to-noise ratio (SNR). 
As the reported results in Fig. 8 in \cite{zhang2025semantics}, 
the reconstruction performance curve ($S(\gamma)$ over $\gamma$) exhibits an $S$-shaped, or sigmoidal relationship with respect to the channel noise level when $\gamma$ is plotted in dB. 
This characteristic dependence can be effectively captured by a generalized logistic function \cite{zhang2025beamforming}, 
which is adopted to approximate $S(\gamma)$ as follows.
\begin{figure*}[t] 
    \centering
    \includegraphics[width=0.9\linewidth]{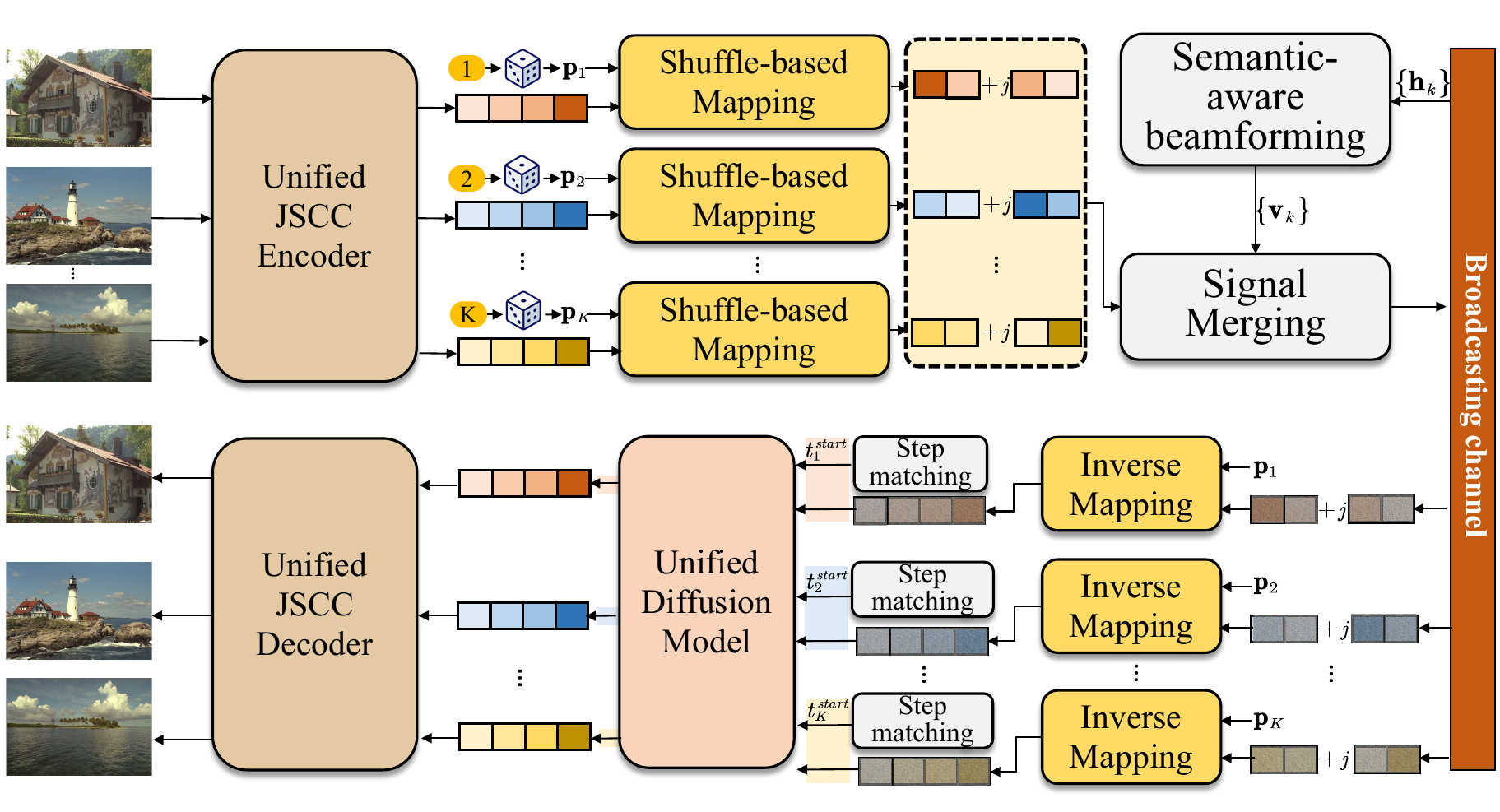}
    \caption{{Transceiver Design for the downlink transmission scenario with uncorrelated data.}}
    \label{Fig:solution framework for semantics uncorrelated data}
\end{figure*}

To accurately capture this characteristic dependence, the generalized logistic function is adopted to approximate $S(\gamma)$, as follows. 
\begin{align}
    S(\gamma)\approx a + \frac{b}{c+\exp(-d\gamma_{[\rm dB]})}=a+\frac{b}{c+\gamma^{-e}},
\end{align}
where $a$, $b$, $c$, $d$, and $e$ are model-specific hyperparameters determined by the choice of reconstruction metric and the employed JSCC model. In this formulation,
$\gamma_{[\rm dB]}$ denotes the noise power in dB. 
The equivalence between the two forms is established through the relationship $e=\frac{10d}{\ln10}$. 

Given the JSCC model and the mapping strategy described in Fig. \ref{Fig:solution framework for semantics uncorrelated data}, the corresponding  beamforming problem can be formulated as follows. 
\begin{align}
\mathbf{P1}:\max_{\{\mathbf{v}_k\}}& ~~\sum_{k=1}^K\bigg[a+\frac{b}{c+(\gamma_k)^{-e}}\bigg],\\
{\rm s.t. } &~~\sum_{k=1}^K\|\mathbf{v}_k\|_2^2\leq P_T
\end{align}
where $\gamma_k=\frac{|\mathbf{h}_k^H\mathbf{v}_k|^2}{\sigma^2+\sum_{m\neq k }|\mathbf{h}_k^H\mathbf{v}_m|^2}$ denote the SINR for user $k$. 

For analytical tractability, we first absorb the power constraint into the SINR term, leading to the reformulated problem: 
\begin{align}
\mathbf{P2}:\max_{\{\mathbf{v}_k\}}\sum_{k=1}^K[a+\frac{b}{c+(\Gamma_k)^{-e}}],
\end{align}
where the equivalent SINR terms are given by
\begin{align}\label{eq: equivalent SINR}
\Gamma_k=\frac{|\mathbf{h}_k^H\mathbf{v}_k|^2}{\frac{\sum_{j}\|\mathbf{v}_j\|_2^2}{P_T}\sigma^2+\sum_{m\neq k }|\mathbf{h}_k^H\mathbf{v}_m|^2}.
\end{align}
As established in \cite{hu2020iterative},  the optimal solution to $\mathbf{P1}$ can be obtained by first solving $\mathbf{P2}$,  followed by the power normalization step. Thus, the focus shifts to solving $\mathbf{P2}$. 

It can be found that the objective function in $\mathbf{P2}$ is non convex due to the transcendental nature of the function with respect to $\{\mathbf{v}_k\}$. Given this, we  adopt the approximation method \cite{zhang2025beamforming}, which provides a tight lower bound for the objective: 
\begin{align}
a+\frac{b}{c+(\Gamma_{k})^{-e}}\geq \zeta(\Gamma_k,\Gamma_k^0),
\end{align}
where $\zeta(\Gamma_k,\Gamma_k^0)=D(\Gamma_k^0)+E(\Gamma_k^0)\frac{|\mathbf{h}_k^H\mathbf{v}_k|^2}{F(\Gamma_k^0)|\mathbf{h}_k^H\mathbf{v}_k|^2+G(\Gamma_k^0)(\frac{\sum_{j}\|\mathbf{v}_j\|_2^2}{P_T}\sigma^2+\sum_{m\neq k}|\mathbf{h}_k^H\mathbf{v}_m|^2)}$ . The coefficients are determined as follows: 
When $e\leq 1$,  $D(\Gamma_k^0)=a$, $E(\Gamma_k^0)=b$, $F(\Gamma_k^0)=c+(1-e)(\Gamma_k^0)^{-e}$, and $G(\Gamma_k^0)=e(\Gamma_k^0)^{1-e}$; When $e>1$, $D=a+\frac{b(1-e)(\Gamma_k^0)^e}{c(1-e)(\Gamma_k^0)^e+1}$, $E(\Gamma_k^0)=\frac{be(\Gamma_k^0)^{e-1}}{c(1-e)(\Gamma_k^0)^e+1}$, and $F(\Gamma_k^0)=ce(\Gamma_k^0)^{e-1}$, and $G(\Gamma_k^0)=c(1-e)(\Gamma_k^0)^e+1$. 

Following the majorization-minimization principle, we iteratively solve the following surrogate problem:
\begin{align}
\mathbf{P4}:\max_{\{\mathbf{v}_k\}} & ~~\sum_{k=1}^K \zeta(\Gamma_k,\Gamma_k^0).
\end{align}
$\mathbf{P4}$ is a multiple-ratio problem, we leverage the quadratic transformation, leading to the equivalent  formulation: 
\begin{align}
\mathbf{P5}: &\max_{\{r_k\},\{\mathbf{v}_k\}}\sum_{k=1}^KE(\Gamma_k^0)\bigg[2r_k \Re\{\mathbf{h}_k^H\mathbf{v}_k\}-r_k^2\bigg(F(\Gamma_k^0)|\mathbf{h}_k^H\mathbf{v}_k|^2\notag\\&~+G(\Gamma_k^0)\bigg({\sum_{m\neq k}|\mathbf{h}_k^H\mathbf{v}_m|^2+\frac{\sum_j \|\mathbf{v}_j\|^2}{P_T}\sigma^2}\bigg)\bigg)\bigg].  
\end{align}
Given the current beamforming vectors $\{\mathbf{v}_k\}$, the optimal auxiliary variables $\{r_k\}$ are obtained via
{\begin{align}\label{eq: auxiliary variable update}
r_k&={\rm Re}\{\mathbf{h}_k^H\mathbf{v}_k\}\bigg[F(\Gamma_k^0)|\mathbf{h}_k^H\mathbf{v}_k|\notag\\&~~~+G(\Gamma_k^0)({\sum_{m\neq k}|\mathbf{h}_k^H\mathbf{v}_m|^2
+\frac{\sum_j \|\mathbf{v}_j\|^2}{P_T}\sigma^2})\bigg]^{-1}.
\end{align}
}Conversely, given $\{\mathbf{r}_k\}$, the optimal beamforming vectors are updated as shown in the top of the next page (\ref{eq: beamforming vector update}). 

In summary, the original problem $\mathbf{P1}$ can be effectively solved by iteratively updating the auxiliary variables $\{\Gamma_k^0\}$, $\{r_k\}$, and the beamforming vectors $\{\mathbf{v}_k\}$ until convergence, the corresponding algorithm is outlined in Algorithm \ref{algo:beamforming algorithm for uncorrelated data}. 
\begin{figure}[t]
	\label{alg:LSB_}
	\begin{algorithm}[H]
	  \caption{Beamforming algorithm for the multi-user scenario of transmitting semantics-uncorrelated data}\label{algo:beamforming algorithm for uncorrelated data}
	  \begin{algorithmic}[1]
        \renewcommand{\algorithmicrequire}{ \textbf{Initialize:}}
        \REQUIRE the precoding vectors by $\mathbf{v}_k=\mathbf{h}_k$. 
        \REPEAT 
        \STATE Update the SINR related terms by $\Gamma_k^0\leftarrow\Gamma_k$ with (\ref{eq: equivalent SINR}).
        \STATE Update the auxiliary variables $\{r_k\}$ with (\ref{eq: auxiliary variable update}).
        \STATE Update the beamforming vectors $\{\mathbf{v}_k\}$ with (\ref{eq: beamforming vector update}).  
        \UNTIL the objective value of $\mathbf{P2}$ converges or the iteration number reaches the maximum. 
        \STATE {\textbf{Output:}} The normalized beamforming vectors, i.e., $\mathbf{v}_k^*\leftarrow \sqrt{\frac{P_T}{\sum_{j}\|\mathbf{v}_j\|_2^2}}\mathbf{v}_k$, $\forall k$.
	  \end{algorithmic}
	\end{algorithm}
	\vspace{-10mm}
\end{figure}
    \begin{figure*}[!]
        \begin{align}\label{eq: beamforming vector update}
            \mathbf{v}_k=r_kE(\Gamma_k^0)\left[r_k^2E(\Gamma_k^0)F(\Gamma_k^0)\mathbf{h}_k\mathbf{h}_k^H+\sum_{m\neq k} r_m^2E(\Gamma_m^0)G(\Gamma_m^0)\mathbf{h}_m\mathbf{h}_m^H+\left[\sum_{j=1}^Kr_j^2E(\Gamma_j^0)G(\Gamma_j^0)\right]\frac{\sigma^2}{P_T}\mathbf{I}\right]^{-1}\mathbf{h}_k.
        \end{align}
        \hrule
		\vspace{-6mm}
    \end{figure*}
\section{Transceiver Design for Transmitting Semantics-Correlated Data}\label{sec: transmitting semantics-correlated data}
In this section, we consider a general scenario where the transmitted images $\{\mathbf{I}_k\}$ may exhibit semantic correlations as shown in Fig. \ref{Fig:solution framework for semantics correlated data}. Such situations frequently occur when multiple users access the same video stream or transmit content with overlapping semantic features. In these contexts, the signals corresponding to semantically related images can be exploited for cooperative transmission, rather than being simply regarded as inter-user interference.  
We extend the transmission framework detailed in Section \ref{sec: transceiver design for semantics-uncorrelated data} by  
proposing a correlation-inspired grouping-based mapping strategies, which is followed by a coordinate multi-point(COMP) beamforming design. 

\subsection{Semantic Similarity Evaluation}
\begin{figure*}[t] 
	\centering
    \subfigure[Cosine similarity of latent features from ADJSCC.]{
		\label{Fig:similarity :a} 
		\includegraphics[width=0.235\linewidth]{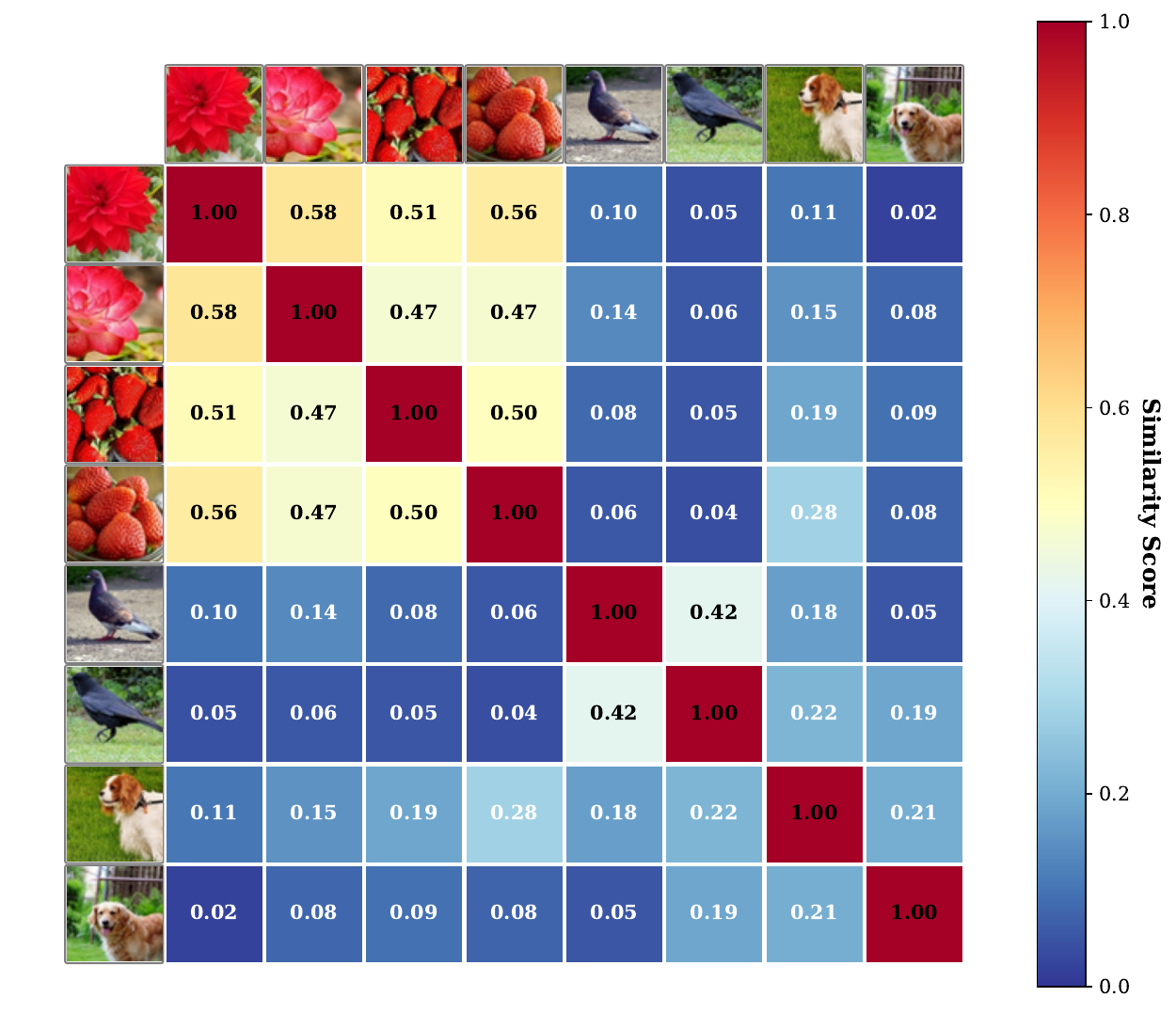}}
		\subfigure[Cosine similarity of latent features from VAEJSCC.]{
			\label{Fig:similarity :b} 
			\includegraphics[width=0.235\linewidth]{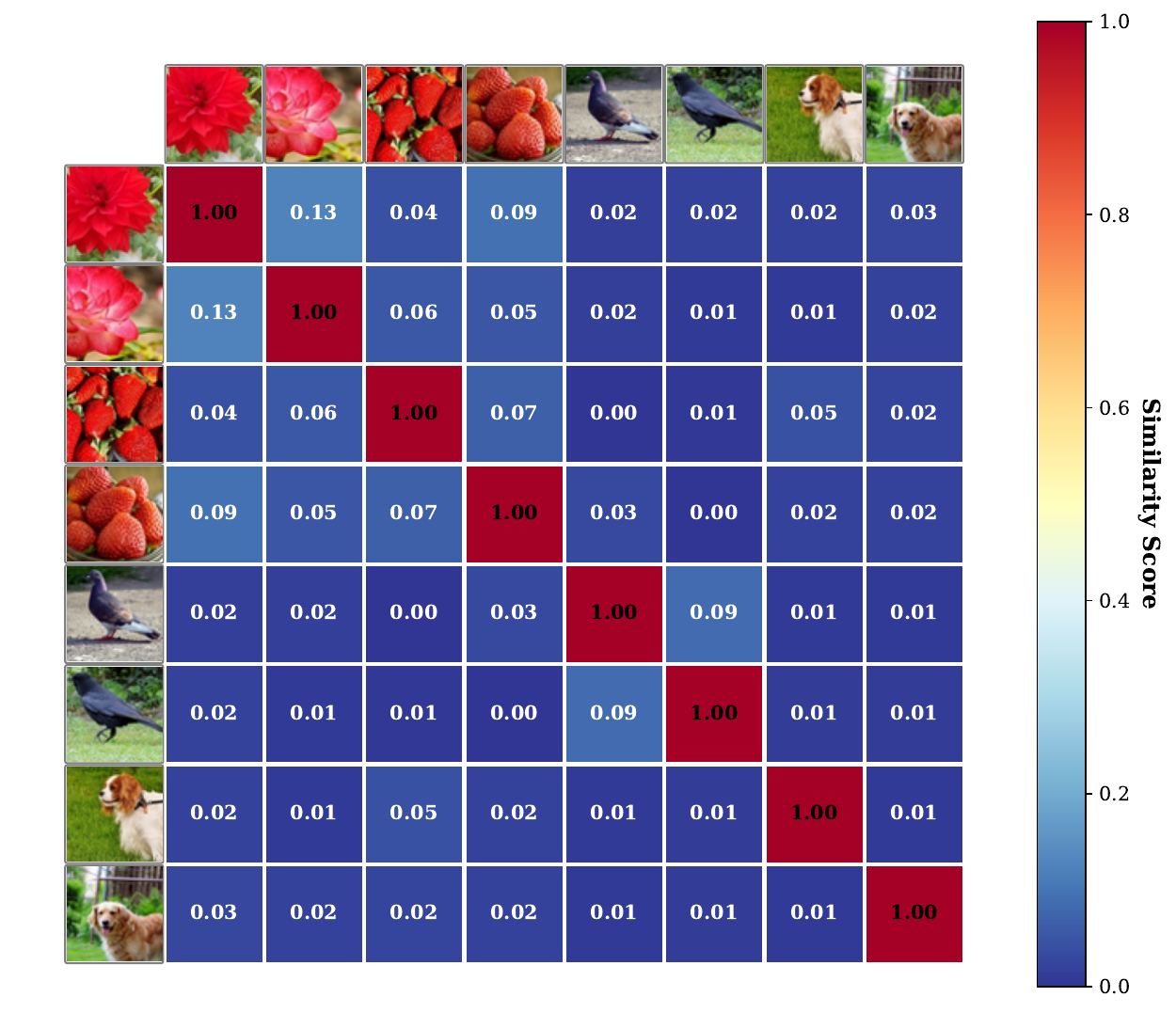}}
            \subfigure[Cosine similarity of latent features from SD3-JSCC.]{
                \label{Fig:similarity :c} 
                \includegraphics[width=0.235\linewidth]{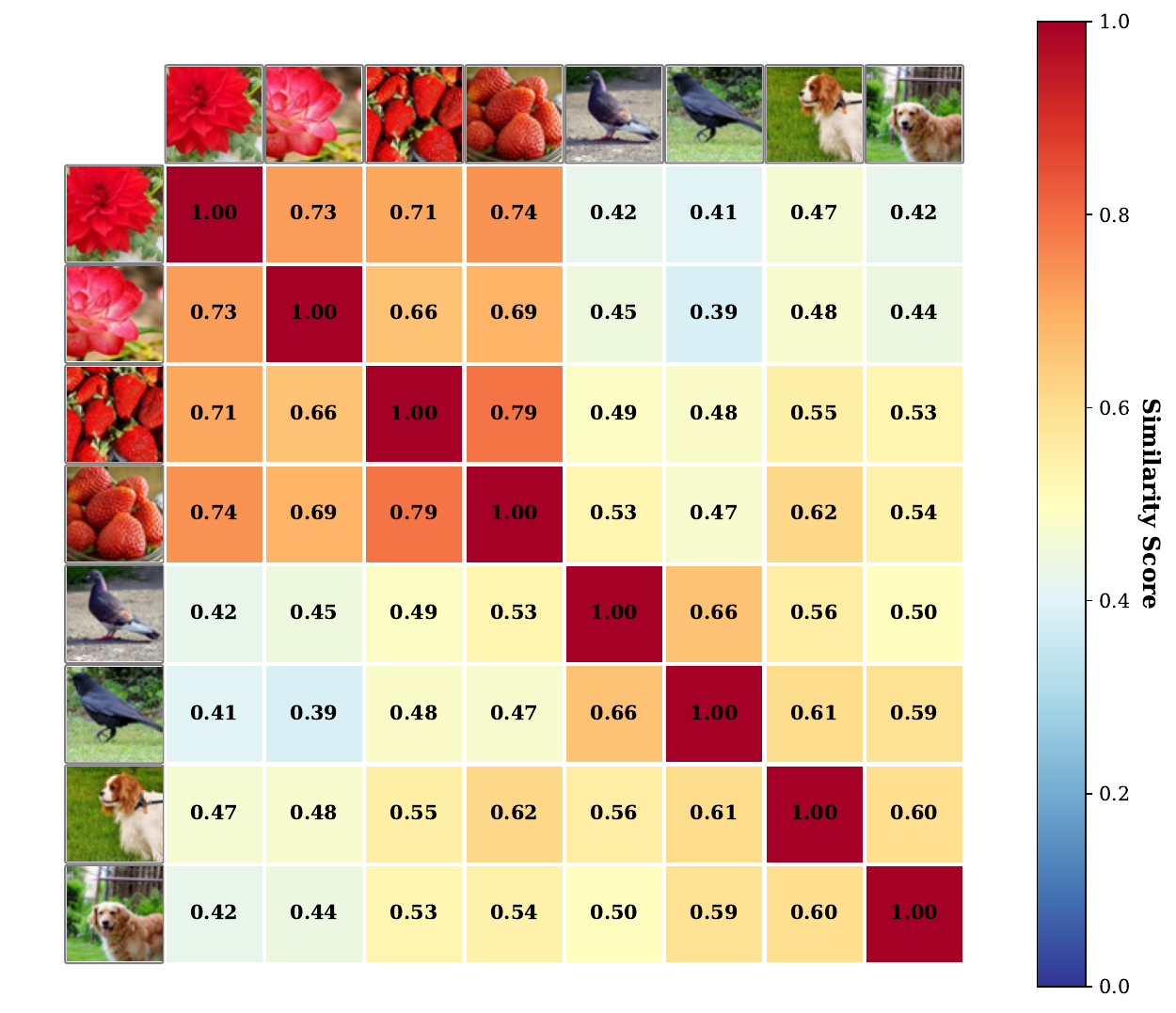}}
                \subfigure[CLIP Score among images.]{
                    \label{Fig:similarity :d} 
                    \includegraphics[width=0.235\linewidth]{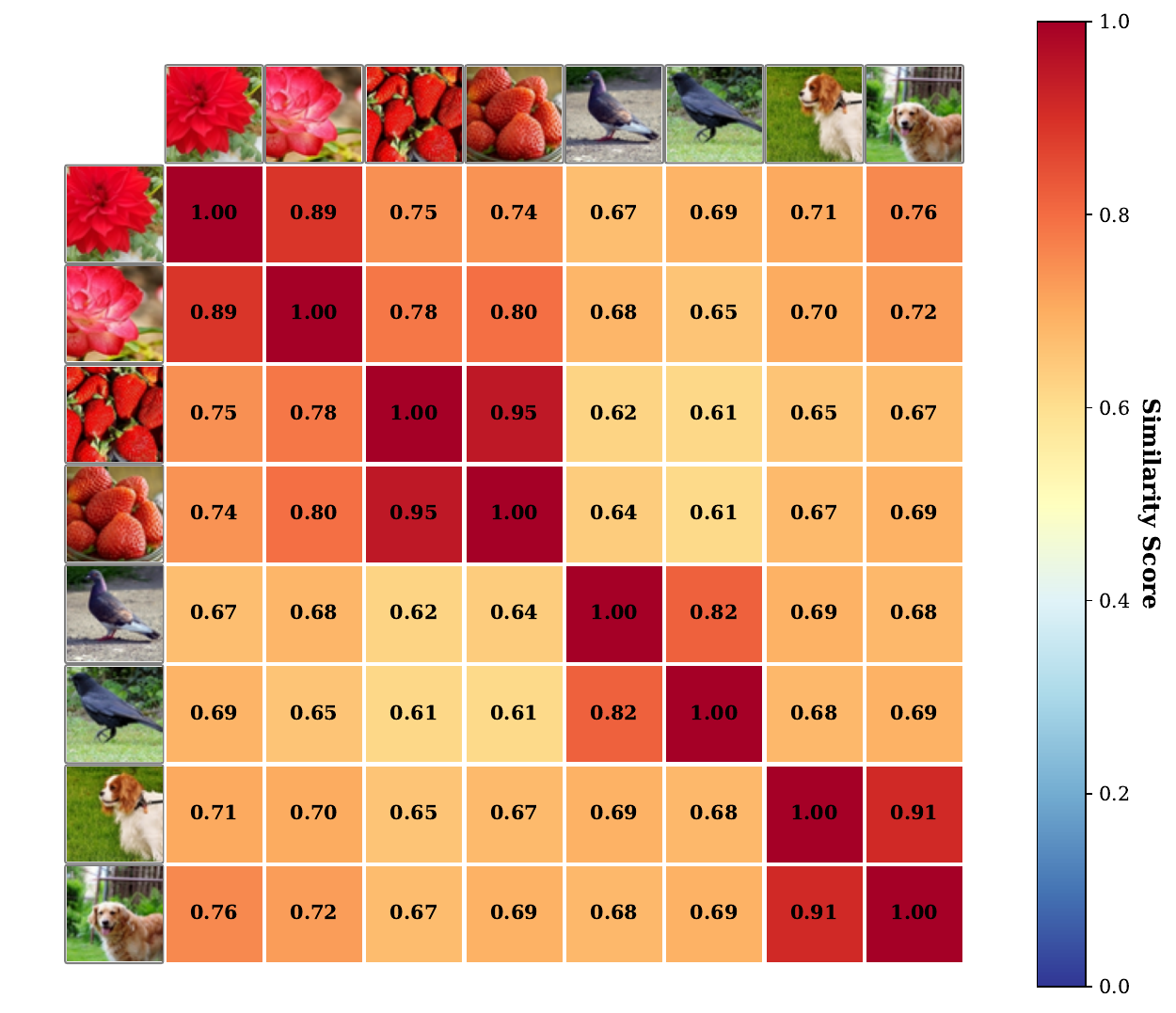}}	
		\vspace{1.5mm}
	\caption{Analysis of different similarity metrics.}
	\label{Fig: outage evaluation}
\end{figure*}
To quantify the semantic similarity between transmitted images, an intuitive way is to measure their proximity in the latent space of the JSCC encoder. We begin by assessing this approach and subsequently propose a more effective method based on the CLIP model.
We select eight images from the Linnaeus dataset, arranged into four pairs such that each pair shares the same semantic category. Latent features are extracted from three representative JSCC models, including ADJSCC, VAEJSCC, and SD3-JSCC. 
The cosine similarity between every pair of feature vectors is computed as shown in Fig. \ref{Fig:similarity :a}, Fig. \ref{Fig:similarity :b}, and Fig. \ref{Fig:similarity :c} respectively. 
To quantify the semantic similarity between transmitted images, a natural approach is to assess their proximity within the latent space of the JSCC encoder. To evaluate the validity of this method, we select eight images arranged into four pairs, with each pair belonging to the same semantic category. Latent features are extracted from three representative JSCC models: ADJSCC, VAEJSCC, and SD3-JSCC. The resulting cosine similarities between image pairs are depicted in Fig. \ref{Fig:similarity :a}, Fig. \ref{Fig:similarity :b}, and Fig. \ref{Fig:similarity :c}, respectively. Notably, the similarity maps produced by different JSCC models exhibit considerable variation. More importantly, we observe instances where high cosine similarity scores are assigned to image pairs from different semantic categories, while some semantically identical pairs receive relatively low scores. This inconsistency highlights a fundamental limitation: the latent spaces of JSCC models, which are primarily optimized for end-to-end transmission fidelity rather than semantic alignment, are not well-suited for accurately measuring semantic similarity between images.

Motivated by this limitation, we turn to the contrastive language-image pretraining (CLIP) model, which is trained via contrastive learning to align images and their textual descriptions in a shared embedding space. Although CLIP is designed for cross‐modal retrieval, its image encoder alone yields embeddings that capture rich semantic content. Accordingly, we employ the CLIP image encoder $\mathcal{E}(\cdot)$ to extract feature vectors for each image and compute their pairwise similarity as
\begin{align}
R{(\mathbf{I}_i,\mathbf{I}_j)}=\frac{(\mathcal{E}(\mathbf{I}_i))^T\mathcal{E}(\mathbf{I}_j)}{\|\mathcal{E}(\mathbf{I}_i)\|\|\mathcal{E}(\mathbf{I}_j)\|}.
\end{align}
The similarity matrix derived from CLIP embeddings, also shown in Fig. \ref{Fig:similarity :d}, exhibits a clear block‐diagonal structure: each image pair belonging to the same category achieves the highest similarity score, in alignment with human perception. This confirms that CLIP’s latent space provides a robust and semantically meaningful metric for comparing image content, and thereby we adopt it as the similarity metric for subsequent semantic‐aware mapping and beamforming design.
\subsection{Channel- and Data Similarity-Aware Grouping Strategy}
\begin{figure}[!]
	\begin{algorithm}[H]
	  \caption{Channel- and data similarity-aware user grouping}\label{algo:user_grouping}
	  \begin{algorithmic}[1]
        \REQUIRE User set $\mathcal{K} \!=\! \{1,\ldots,K\}$, similarity matrix $\mathbf{R} \!\in\! \mathbb{R}^{K \times K}$, threshold $th$, channel-beamforming gain matrix $\mathbf{C}$, where $[\mathbf{C}]_{j,k}=|\mathbf{h}_j^H\mathbf{v}_k|^2$, $\{\mathbf{v}_k\}$ is obtained through conducting Algorithm \ref{algo:beamforming algorithm for uncorrelated data}. 
        \STATE Initialize $\mathcal{G}_1 \!\gets\! \{1\}$, $L \!\gets\! 1$, $\boldsymbol{\mathcal{G}}_{\text{final}} \!\gets\! \emptyset$
        \FOR{$k=2$ \TO $K$}
            \STATE Find the most relevant group by $m \gets \operatorname*{arg\,min}\limits_{n} \frac{1}{|\mathcal{G}_n|}\sum_{j\in\mathcal{G}_n} R(\mathbf{I}_k,\mathbf{I}_j)$
            \IF{$\frac{1}{|\mathcal{G}_m|}\sum_{j\in\mathcal{G}_m} R(\mathbf{I}_k,\mathbf{I}_j) \geq th$}
                \STATE $\mathcal{G}_m \gets \mathcal{G}_m \cup \{k\}$
            \ELSE
                \STATE $L \gets L + 1$, $\mathcal{G}_L \gets \{k\}$
            \ENDIF
        \ENDFOR
        \FOR{$l=1$ \TO $L$}
            \IF{$|\mathcal{G}_l| \leq 2$}
                \STATE $\boldsymbol{\mathcal{G}}_{\text{final}} \gets \boldsymbol{\mathcal{G}}_{\text{final}} \cup \{\mathcal{G}_l\}$
            \ELSE
                \STATE Exhaustively partition $\mathcal{G}_l$ into subgroups of at most two users, and select the partition $\{\mathcal{G}_{l,i}\}_{i=1}^{L_l}$ that maximizes  $\sum_{i=1}^{L_l}\sum_{i\in\mathcal{G}_{l,i}}\sum_{i\in\mathcal{G}_{l,j}}[\mathbf{C}]_{{i,j}}$. 
                \STATE $\boldsymbol{\mathcal{G}}_{\text{final}} \gets \boldsymbol{\mathcal{G}}_{\text{final}} \cup \{\mathcal{G}_{l,1},\ldots,\mathcal{G}_{l,L_l}\}$
            \ENDIF
        \ENDFOR
        \STATE \textbf{Return} $\boldsymbol{\mathcal{G}}_{\text{final}}$
    \end{algorithmic}
	\end{algorithm}
	\vspace{-3mm}
\end{figure} 
As discussed in Section \ref{sec: user interference analysis}, the primary purpose of the mapping operation is to transform inter-user interference into an equivalent Gaussian noise component, thereby simplifying subsequent denoising and reconstruction. However, when certain users transmit semantically correlated images, the resulting interference is often far less detrimental to reconstruction quality than unstructured Gaussian noise. In these cases, indiscriminately treating such interference as noise may be overly conservative and may preclude opportunities for cooperative transmission. 

\begin{figure*}[t] 
    \centering
    \includegraphics[width=0.9\linewidth]{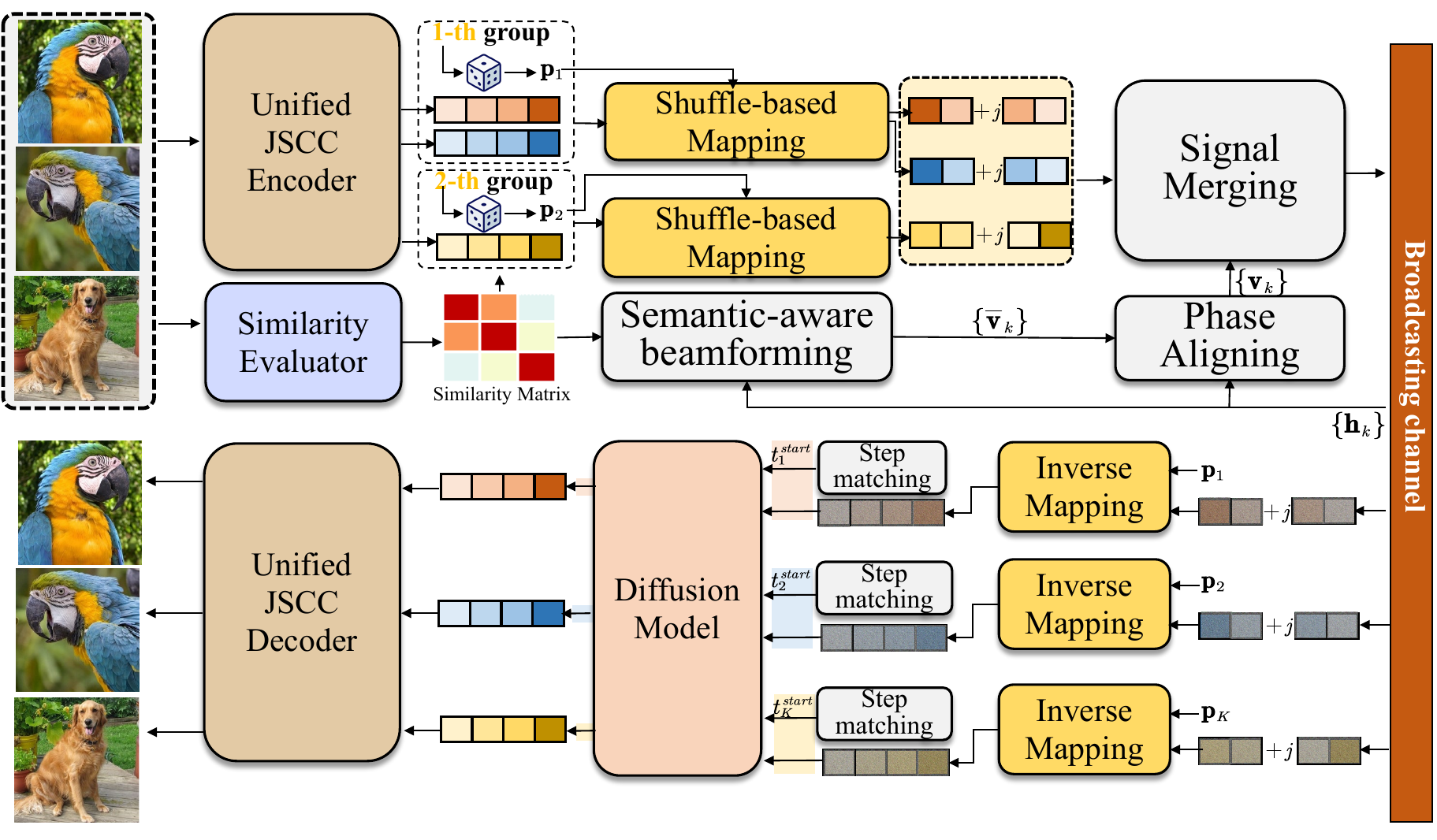}
    \caption{{Transceiver Design for the downlink transmission scenario with correlated data.}}
    \label{Fig:solution framework for semantics correlated data}
\end{figure*}

To address this limitation in resource-constrained multi-user environments, we propose a user grouping strategy based on semantic similarity. Rather than independently mapping the signals of all other users to Gaussian noise, users are clustered according to their semantic relationships. Within each group, a shared mapping function is applied, which preserves the constructive effects of semantically aligned interference and enables more effective exploitation of semantic redundancy. Specifically, we employ the CLIP-based similarity metric $R_{i,j}$ to evaluate the degree of semantic correlation between users. Users whose pairwise similarity scores exceed a predefined threshold are assigned to the same group. 
However, as observed in the received signal in (\ref{eq: received signal}), the phases of the beamforming-channel gains from the target and non-target users, denoted by $\phi_{i,i}$ and $\phi_{i,j}$, respectively, generally differ. This phase misalignment among users within the same group undermines the potential gains of cooperative transmission. As will be shown in the next subsection, this issue can only be effectively mitigated when each group contains exactly two users. Consequently, we further partition each group into subgroups consisting of at most two users, based on their achievable cooperative channel gains. The detailed grouping procedure is presented in Algorithm \ref{algo:user_grouping}. 
\begin{figure*}[t]
        \begin{align}\label{eq: beamforming vector update semantic correlated}
            \bar{\mathbf{v}}_k\leftarrow r_kE(\Gamma_k^0)\bigg[r_k^2E(\Gamma_k^0)F(\Gamma_k^0)\mathbf{h}_k\mathbf{h}_k^H+\sum_{m\neq k} r_m^2\omega_{m,k}E(\Gamma_m^0)G(\Gamma_m^0)\mathbf{h}_m\mathbf{h}_m^H+\bigg[\sum_{j=1}^Kr_j^2E(\Gamma_j^0)G(\Gamma_j^0)\bigg]\frac{\sigma^2}{P_T}\mathbf{I}\bigg]^{-1}\mathbf{h}_k. 
        \end{align}
        \hrule
		\vspace{-6mm}
    \end{figure*}
\begin{figure}[t]
	\begin{algorithm}[H]
	  \caption{Beamforming algorithm for the multi-user scenario of transmitting semantics-correlated data}\label{algo:beamforming algorithm for correlated data}
	  \begin{algorithmic}[1]
        \renewcommand{\algorithmicrequire}{ \textbf{Initialize:}}
        \REQUIRE the precoding vectors by $\mathbf{v}_k=\mathbf{h}_k$. 
        \REPEAT 
        \STATE Update the SINR related terms by ${\Gamma}_k^0\leftarrow {|\mathbf{h}_k^H\mathbf{v}_k|^2}/\bigg({\frac{\sum_{j}\|\mathbf{v}_j\|_2^2}{P_T}\sigma^2+\sum_{m\neq k }\omega_{i,j}|\mathbf{h}_k^H\mathbf{v}_m|^2}\bigg)$, $\forall k$.
        \STATE Update $\{r_k\}$ by $r_k\leftarrow {\rm Re}\{\mathbf{h}_k^H\mathbf{v}_k\}\bigg[F(\Gamma_k^0)|\mathbf{h}_k^H\mathbf{v}_k|+G(\Gamma_k^0)({\sum_{m\neq k}\omega_{k,m}|\mathbf{h}_k^H\mathbf{v}_m|^2+\frac{\sum_j \|\mathbf{v}_j\|^2}{P_T}\sigma^2})\bigg]^{-1}$.
        \STATE Update $\{\mathbf{v}_k\}$ by (\ref{eq: beamforming vector update semantic correlated}).  
        \UNTIL the objective value of $\mathbf{P6}$ converges or the iteration number reaches the maximum. 
        \STATE $\bar{\mathbf{v}}_k\leftarrow \sqrt{\frac{P_T}{\sum_{j}\|\mathbf{v}_j\|_2^2}}\mathbf{v}_k$, $\forall k$; calculate $\varphi_{i,j}$ such that $\mathbf{h}_i^H\bar{\mathbf{v}}_j=|\mathbf{h}_i^H\bar{\mathbf{v}}_j|e^{j\varphi_{i,j}}$. 
        \FOR{$\mathcal{G}$ in $\boldsymbol{\mathcal{G}}$}
        \IF{$|\mathcal{G}|=1$}
        \STATE $\mathbf{v}_k^*\leftarrow \bar{\mathbf{v}}_k$, $ k\in \mathcal{G}$. 
        \ELSE
        \STATE $\mathbf{v}_i^*\leftarrow \bar{\mathbf{v}}_i$, $\mathbf{v}_j^*\leftarrow e^{j\varphi_{j,i}}\bar{\mathbf{v}}_j$; $i, j\in \mathcal{G}$.
        \ENDIF
        \ENDFOR
        \STATE \textbf{Return} $\{\mathbf{v}_k^*\}$. 
    \end{algorithmic}
	\end{algorithm}
	\vspace{-8mm}
\end{figure}
\subsection{COMP-based Beamforming Design}
As discussed earlier, users who transmit images with similar semantic content are grouped together, and this grouping should be explicitly considered in the beamforming design. Since intra-group interference has a reduced impact on the reconstruction quality, we introduce a semantic-aware weighting mechanism to refine the contribution of each interference term in the SINR term. Specifically, the weighted SINR for user $k$ is given by
\vspace{-1mm}
\begin{align}
\bar{\gamma}_k=\frac{|\mathbf{h}_k^H\mathbf{v}_k|}{\sigma^2+\sum_{m\neq k} \omega_{k,m}\mathbf{h}_k^H\mathbf{v}_m},
\end{align}
where the weighting coefficient  $\omega_{i,j} = 
\begin{cases} 
1-R_{i,j}, & \text{if user $i$ and $j$ in the same group},\\
1, & \text{otherwise}.
\end{cases}$. 
With such definition, $\omega_{i,j}$ is expected to well capture the exact impact of the inter-user interference on the overall reconstruction performance. 

The corresponding beamforming problem is then formulated as follows.
\vspace{-1mm}
\begin{align}
\mathbf{P2}: \max~&a+\frac{b}{c+(\bar{\gamma}_k)^{-e}},\\
{\rm s.t.}~&\sum_{k=1}^K\|\mathbf{v}_k\|_2^2\leq P_T; \\
&\angle(\mathbf{h}_k^H\mathbf{v}_i)=\angle(\mathbf{h}_k^H\mathbf{v}_k), \forall k, i\in \mathcal{G}_m, m=1,\ldots,L.
\end{align}
where $\angle(\cdot)$ denotes the phase of a complex number. (28) ensures phase alignment among users within the same group; misalignment can adversely affect cooperative transmission by causing severe interference between the real and imaginary components. To solve $\mathbf{P2}$, We first omit the constraint (28), and the solving procedure for the resulting problem parallels that of $\mathbf{P1}$ described in section 3.3, resulting in the initial set of precoding vectors $\{\bar{\mathbf{v}}_k\}$. As revealed in \cite{bjornson2014optimal}, applying an arbitrary phase rotation $e^{j\theta}$ to $\bar{\mathbf{v}}_k$ does not affect the resulting SINR, meaning that $e^{j\theta}\bar{\mathbf{v}}_k$ and $\mathbf{v}_k$ deliver equivalent performance. Without loss of optimality, we assume that $\angle(\mathbf{h}_k^H\bar{\mathbf{v}}_k)=0$. 

Next, we conduct post processing for $\{\bar{\mathbf{v}}_k\}$ for satisfying (28). Specifically, for two users $i$ and $j$ within the same group, the received cooperative transmission signal can be expressed as  
\vspace{-2mm}
\begin{align}\label{eq: received signal from same group}
\begin{bmatrix}[\mathbf{y}'_i]_n\\ [\mathbf{y}'_j]_n\end{bmatrix}=\begin{bmatrix} e^{j\psi_i}\mathbf{h}_i^H\mathbf{v}_i&e^{j\psi_i}\mathbf{h}_i^H\mathbf{v}_j\\e^{j\psi_j}\mathbf{h}_j^H\mathbf{v}_i&e^{j\psi_j}\mathbf{h}_j^H\mathbf{v}_j \end{bmatrix}\begin{bmatrix}[\mathbf{z}_i]_n\\ [\mathbf{z}_i]_n \end{bmatrix},  
\end{align}
where $\psi_i$ denotes the phase compensation term at the receiver side. Ideal cooperative transmission requires the phases of all elements in the channel-precoding gain matrix to be zero. Therefore, we aim to rotate $\{\bar{\mathbf{v}}_k\}$ (i.e., $\mathbf{v}_k=e^{j\theta_k}\bar{\mathbf{v}}_k$)  and set $\{\psi_i\}$ for satisfying (28). However, it becomes infeasible to achieve this alignment when the number of users in a group exceeds $2$. Fortunately, empirical observations suggest that the condition $\angle(\mathbf{h}_i^H\bar{\mathbf{v}}_j)+\angle(\mathbf{h}_j^H\bar{\mathbf{v}}_i)\approx0$ holds. we limit the number of users in a group to two or fewer. The phase alignment can be achieved by setting $\theta_i=0,\theta_j=\angle(\mathbf{h}_j\bar{\mathbf{v}}_i)$, $\psi_i=0,\psi_j=\angle(\mathbf{h}_i\bar{\mathbf{v}}_j)$. 
The overall beamforming schemes are concluded in Algorithm \ref{algo:beamforming algorithm for correlated data}.

\section{Experimental Results}\label{sec: numerical results}
In this section, we present a series of experiments designed to evaluate the performance of the proposed scheme. We provide a comprehensive assessment of its effectiveness in both scenarios where the BS transmits semantics-uncorrelated data to multiple users, as well as scenarios involving the transmission of semantics-correlated data.
\subsection{Experimental Setup}
{\textbf{System Configuration:}} For the communication system, the channel vector $\mathbf{h}_k$ is generated from the complex Gaussian distribution: $\mathbf{h}_k\sim \mathcal{CN}(\mathbf{0},\frac{1}{N_t}\mathbf{I}_{N_t})$. The overall power constraint $P_T$ for beamforming vectors is set to $1$, so the SNR is given by ${\rm SNR}=\frac{1}{\sigma^2}$. A key advantage of the proposed scheme is its training-free property for the multi-user scenario; unless otherwise specified, we directly adopt the pretrained JSCC model and its corresponding DM from \cite{zhang2025semantics}, both originally trained on AWGN channels. On this basis, we integrate our proposed mapping strategies and beamforming method to enable multi-user transmission. Unless otherwise mentioned, all evaluations use the COCO2017 dataset \cite{lin2014microsoft}, with images center-cropped and resized to $128\times 128$. The COCO validation set, containing $5,000$ images, is used for performance evaluation. 

{\textbf{Benchmark Schemes:}} We compare the proposed approach with three multi-user transmission schemes: ADJSCC \cite{xu2021wireless}, DeepMA \cite{zhang2023deepma}, and OMDMA \cite{liang2024orthogonal}. The ADJSCC baseline applies a single ADJSCC model, pretrained on AWGN channels, uniformly to all users. For OMDMA, we implement a version with ADJSCC wherein multiple ADJSCC encoder-decoder pairs are independently trained and assigned to each user, resulting in user-specific JSCC model parameters. The DeepMA scheme represents a learning-based multi-user approach, where multiple ADJSCC encoder-decoder pairs are jointly trained with explicit consideration of mutual interference during training. For all schemes, hyperparameters are set to achieve a channel bandwidth ratio (CBR) of $R=\frac{N}{3HW}=\frac{1}{24}$. The training loss is defined as the weighted sum of mean square error (MSE) and the LPIPS loss, where LPIPS is included to enhance perceptual quality. The weighting coefficient for the LPIPS loss is set to $0.1$. The COCO2017 training set is used for model training in all learning-based schemes.

\textbf{Performance Metrics:} To evaluate reconstruction performance, we employ the widely-used peak signal-to-noise ratio (PSNR) and the LPIPS metric. In addition to these low-level fidelity metrics, we also incorporate the CLIP score, which effectively measures high-level semantic similarity between images, or between images and text by calculating the cosine similarity between the CLIP features. Specifically, we evaluate the CLIP score between each reconstructed image and its corresponding original image, providing a comprehensive assessment of semantic preservation.
\begin{figure*}[t] 
	\centering
    \subfigure[PSNR ($N_t=8$, $K=8$).]{
		\label{Fig:semantics uncorrelated diff snr :a} 
		\includegraphics[width=0.30\linewidth]{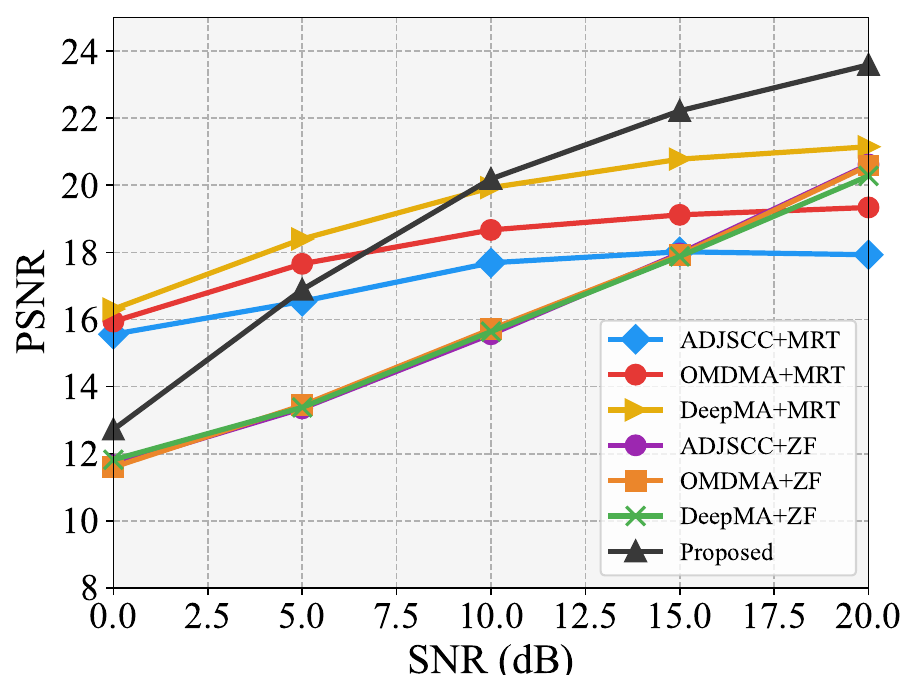}}
		\subfigure[LPIPS ($N_t=8$, $K=8$).]{
			\label{Fig:semantics uncorrelated diff snr :b} 
			\includegraphics[width=0.30\linewidth]{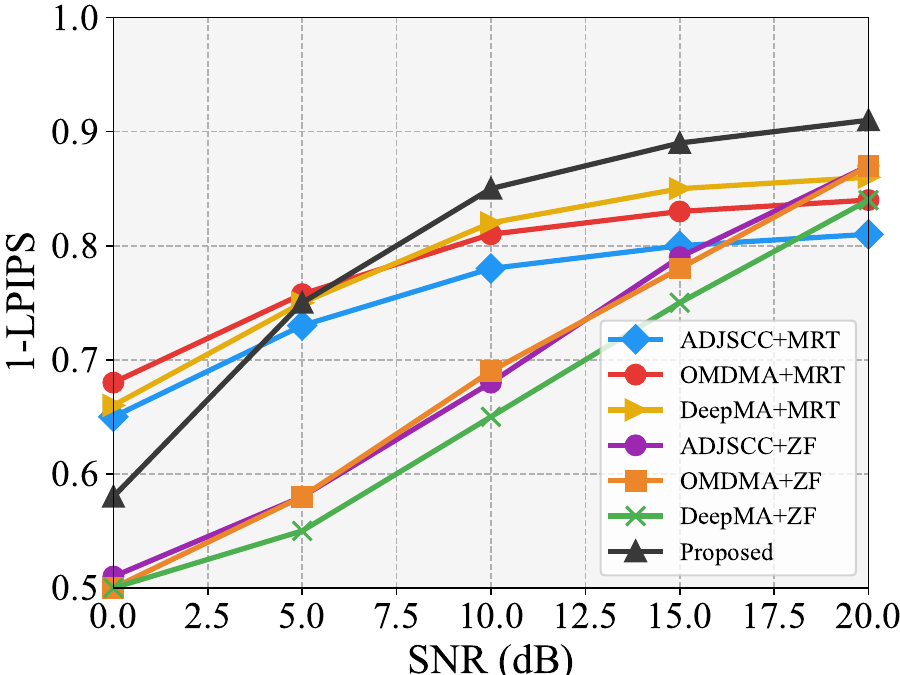}}
            \subfigure[CLIP Score ($N_t=8$, $K=8$).]{
                \label{Fig:semantics uncorrelated diff snr :c} 
                \includegraphics[width=0.30\linewidth]{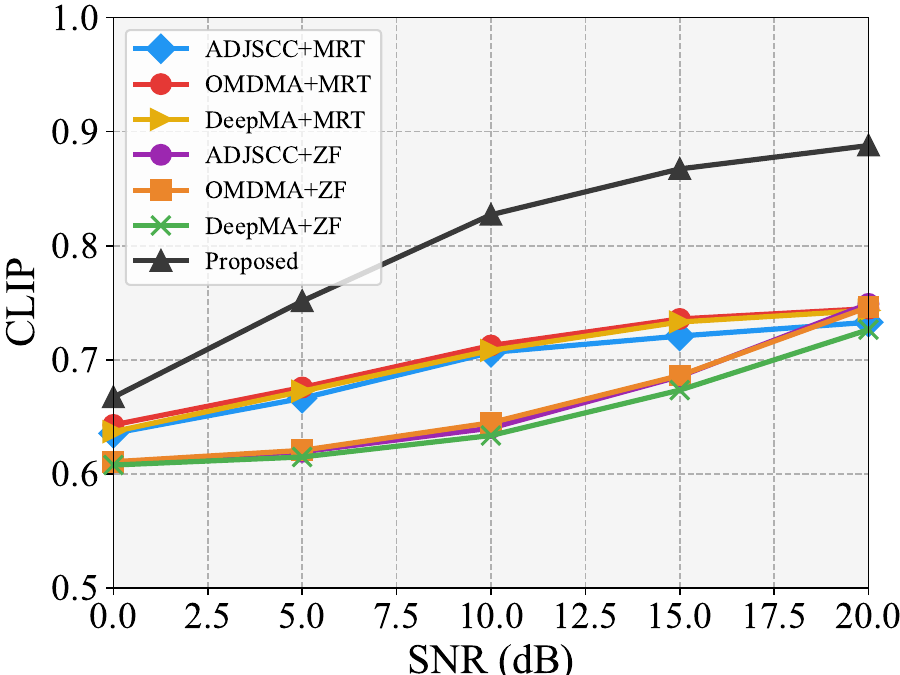}}	
		\vspace{0.5mm}
	\caption{Average reconstruction quality comparison under different measures when $N_t=8$, $K=8$.}
	\vspace{-4mm}
	\label{Fig:semantics uncorrelated diff snr}
\end{figure*}
\begin{figure*}[t] 
	\centering
    \subfigure[PSNR ($N_t=8$, ${\rm SNR}=10$ dB).]{
		\label{Fig: semantics uncorrelated diff user:a} 
		\includegraphics[width=0.30\linewidth]{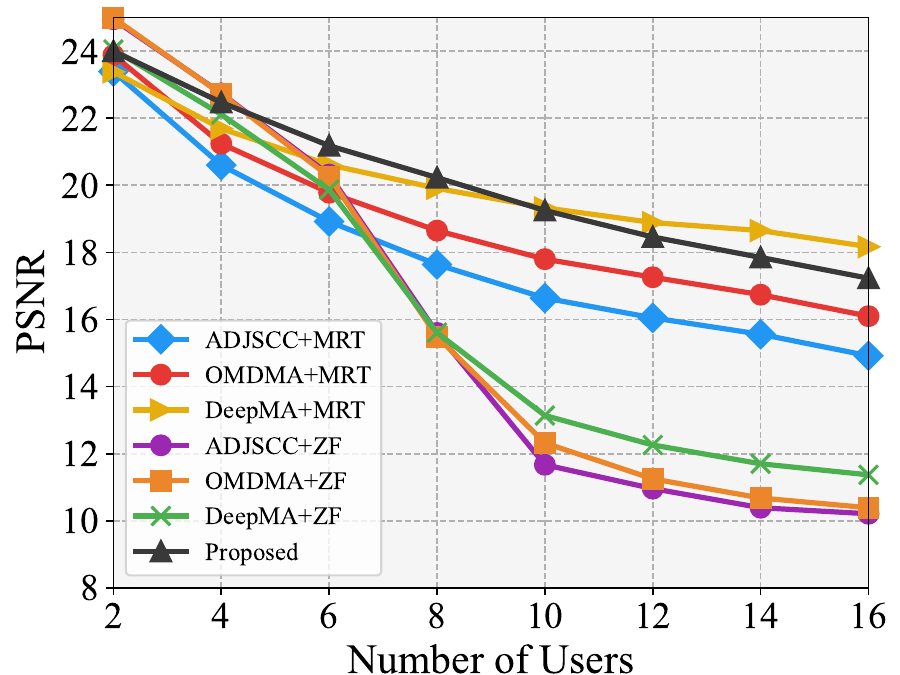}}
		\subfigure[LPIPS ($N_t=8$, ${\rm SNR}=10$ dB).]{
			\label{Fig: semantics uncorrelated diff user:b} 
			\includegraphics[width=0.30\linewidth]{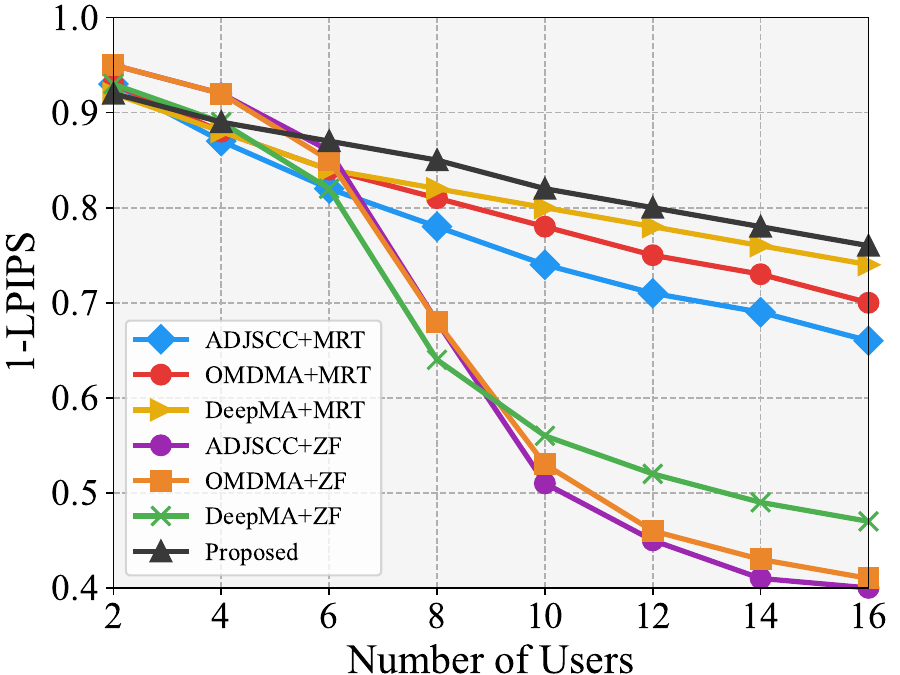}}
            \subfigure[CLIP Score  ($N_t=8$, ${\rm SNR}=10$ dB).]{
                \label{Fig: semantics uncorrelated diff user:c} 
                \includegraphics[width=0.30\linewidth]{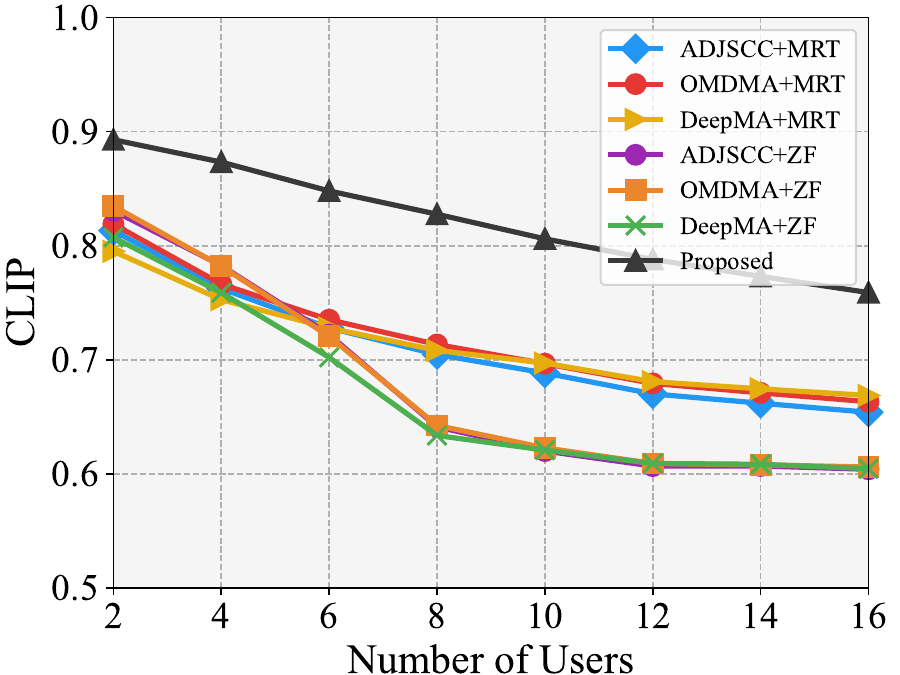}}	
		\vspace{0.5mm}
	\caption{Average reconstruction quality comparison under different measures when $N_t=8$, SNR $=10$dB.}
	\label{Fig: semantics uncorrelated diff user}
\end{figure*}
\subsection{Performance Evaluation over Semantics-Uncorrelated Data}
In this subsection, we assess the effectiveness of the proposed scheme by comparing it with established benchmarks employing widely used beamforming algorithms, including zero forcing (ZF) and maximum ratio transmission (MRT). Performance is assessed across varying SNR and different numbers of users. Fig. \ref{Fig:semantics uncorrelated diff snr} presents results for different SNR settings with $N_t=8$ and $K=8$. It can be found that the ADJSCC scheme, which applies a single model trained on AWGN channels to all users, exhibits unsatisfactory performance due to pronounced inter-user interference. This results underscores that inter-user interference poses a more significant challenge than channel noise and needs to be addressed carefully. The OMDMA scheme, which assigns user-specific ADJSCC models, demonstrates improved performance across all metrics compared to ADJSCC with the same beamforming scheme. Further gains are observed with the DeepMA scheme, which takes the inter-user interference into consideration at the training stage. However, this improvement comes at the expense of increased training complexity and reduced flexibility, as retraining or fine-tuning is required whenever the number of users changes. Among various combination of multi-user strategies and beamforming schemes, DeepMA integrated with MRT scheme achieves the best performance. By introducing a novel shuffle-based symbol mapping, the proposed scheme effectively transforms inter-user interference into equivalent channel noise, enabling direct utilization of models trained in the point-to-point scenario. The DM further enhances performance by efficiently denoising the resulting channel noise. As a result, the proposed method achieves performance comparable to DeepMA+MRT in the low SNR regime, and significantly outperforms all benchmark schemes in terms of PSNR and LPIPS when $SNR\geq 5$ dB. Additionally, the proposed multi-user framework, empowered by the DM, attains substantially higher CLIP scores, demonstrating superior preservation of semantic information. 
This can also be verified by the visual results shown in Fig. \ref{fig:example of broadcasting reconstruction images snr=10db}. 
When varying the number of users as shown in Fig. \ref{Fig: semantics uncorrelated diff user}, it is observed that the transmission performance of all schemes degrades as inter-user interference intensifies. In this context, the DeepMA+MRT scheme achieves the highest PSNR, closely followed by the proposed method. However, DeepMA performance deteriorates rapidly when using ZF beamforming. In terms of LPIPS and CLIP score, the proposed scheme consistently delivers the best results, underscoring its effectiveness in both perceptual quality and semantic fidelity under multi-user conditions.
  \begin{figure*}[t]
	\begin{center}
		\resizebox{0.85\textwidth}{!}
		{\begin{tabular}{c|cccccccc}
\hline
\thead{\strut\Large Schemes/Users} &
\thead{\strut\Large User 1}&
\thead{\strut\Large User 2}&
\thead{\strut\Large User 3}&
\thead{\strut\Large User 4}&
\thead{\strut\Large User 5}&
\thead{\strut\Large User 6}&
\thead{\strut\Large User 7}&
\thead{\strut\Large User 8}\\ \hline
\thead{\strut\Large Origin} &
\adjustbox{valign=m}{\strut\includegraphics[width=0.2\textwidth]{./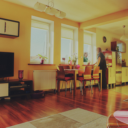}}&
\adjustbox{valign=m}{\strut\includegraphics[width=0.2\textwidth]{./}}&
\adjustbox{valign=m}{\strut\includegraphics[width=0.2\textwidth]{./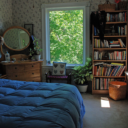}}&
\adjustbox{valign=m}{\strut\includegraphics[width=0.2\textwidth]{./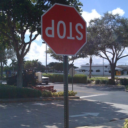}}&
\adjustbox{valign=m}{\strut\includegraphics[width=0.2\textwidth]{./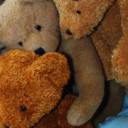}}&
\adjustbox{valign=m}{\strut\includegraphics[width=0.2\textwidth]{./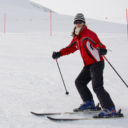}}&
\adjustbox{valign=m}{\strut\includegraphics[width=0.2\textwidth]{./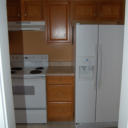}}&
\adjustbox{valign=m}{\strut\includegraphics[width=0.2\textwidth]{./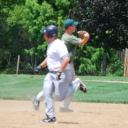}}\\
\thead{\strut\Large ADJSCC\\+MRT} &
\adjustbox{valign=m}{\strut\includegraphics[width=0.2\textwidth]{./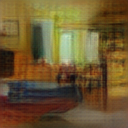}}&
\adjustbox{valign=m}{\strut\includegraphics[width=0.2\textwidth]{./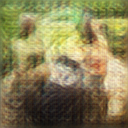}}&
\adjustbox{valign=m}{\strut\includegraphics[width=0.2\textwidth]{./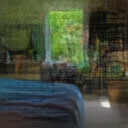}}&
\adjustbox{valign=m}{\strut\includegraphics[width=0.2\textwidth]{./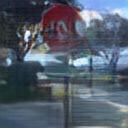}}&
\adjustbox{valign=m}{\strut\includegraphics[width=0.2\textwidth]{./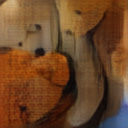}}&
\adjustbox{valign=m}{\strut\includegraphics[width=0.2\textwidth]{./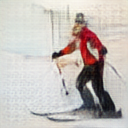}}&
\adjustbox{valign=m}{\strut\includegraphics[width=0.2\textwidth]{./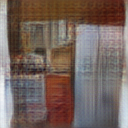}}&
\adjustbox{valign=m}{\strut\includegraphics[width=0.2\textwidth]{./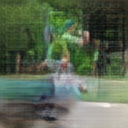}}\\
\thead{\Large OMDMA\\+MRT} &
\adjustbox{valign=m}{\strut\includegraphics[width=0.2\textwidth]{./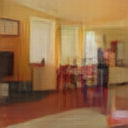}}&
\adjustbox{valign=m}{\strut\includegraphics[width=0.2\textwidth]{./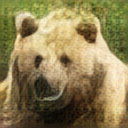}}&
\adjustbox{valign=m}{\strut\includegraphics[width=0.2\textwidth]{./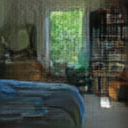}}&
\adjustbox{valign=m}{\strut\includegraphics[width=0.2\textwidth]{./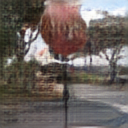}}&
\adjustbox{valign=m}{\strut\includegraphics[width=0.2\textwidth]{./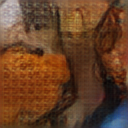}}&
\adjustbox{valign=m}{\strut\includegraphics[width=0.2\textwidth]{./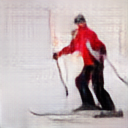}}&
\adjustbox{valign=m}{\strut\includegraphics[width=0.2\textwidth]{./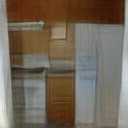}}&
\adjustbox{valign=m}{\strut\includegraphics[width=0.2\textwidth]{./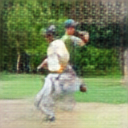}}\\
\thead{\Large DeepMA\\+MRT} &
\adjustbox{valign=m}{\strut\includegraphics[width=0.2\textwidth]{./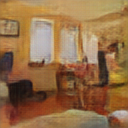}}&
\adjustbox{valign=m}{\strut\includegraphics[width=0.2\textwidth]{./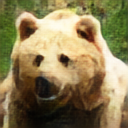}}&
\adjustbox{valign=m}{\strut\includegraphics[width=0.2\textwidth]{./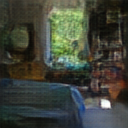}}&
\adjustbox{valign=m}{\strut\includegraphics[width=0.2\textwidth]{./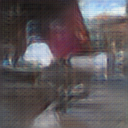}}&
\adjustbox{valign=m}{\strut\includegraphics[width=0.2\textwidth]{./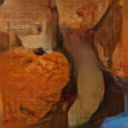}}&
\adjustbox{valign=m}{\strut\includegraphics[width=0.2\textwidth]{./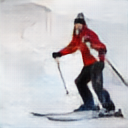}}&
\adjustbox{valign=m}{\strut\includegraphics[width=0.2\textwidth]{./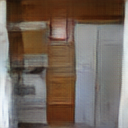}}&
\adjustbox{valign=m}{\strut\includegraphics[width=0.2\textwidth]{./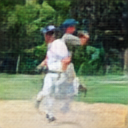}}\\
\thead{\Large Proposed} &
\adjustbox{valign=m}{\strut\includegraphics[width=0.2\textwidth]{./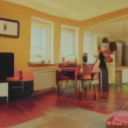}}&
\adjustbox{valign=m}{\strut\includegraphics[width=0.2\textwidth]{./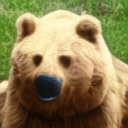}}&
\adjustbox{valign=m}{\strut\includegraphics[width=0.2\textwidth]{./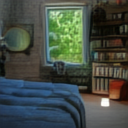}}&
\adjustbox{valign=m}{\strut\includegraphics[width=0.2\textwidth]{./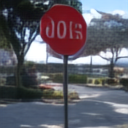}}&
\adjustbox{valign=m}{\strut\includegraphics[width=0.2\textwidth]{./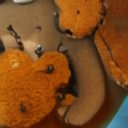}}&
\adjustbox{valign=m}{\strut\includegraphics[width=0.2\textwidth]{./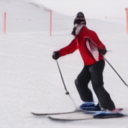}}&
\adjustbox{valign=m}{\strut\includegraphics[width=0.2\textwidth]{./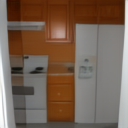}}&
\adjustbox{valign=m}{\strut\includegraphics[width=0.2\textwidth]{./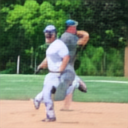}}\\\hline
\end{tabular}
}
   \end{center}
	  \caption{{Examples of reconstructed images under broadcasting channel with SNR $=10$ dB, $N_t=8$, $K=8$.}}
	  \label{fig:example of broadcasting reconstruction images snr=10db}
	 \vspace{-3mm}
  \end{figure*}
\begin{figure*}[t] 
	\centering
    \subfigure[Effectiveness validation of DM denoising and the proposed beamforming scheme($N_t=8$, $K=8$).]{
		\label{Fig: ablation study :a} 
		\includegraphics[width=0.45\linewidth]{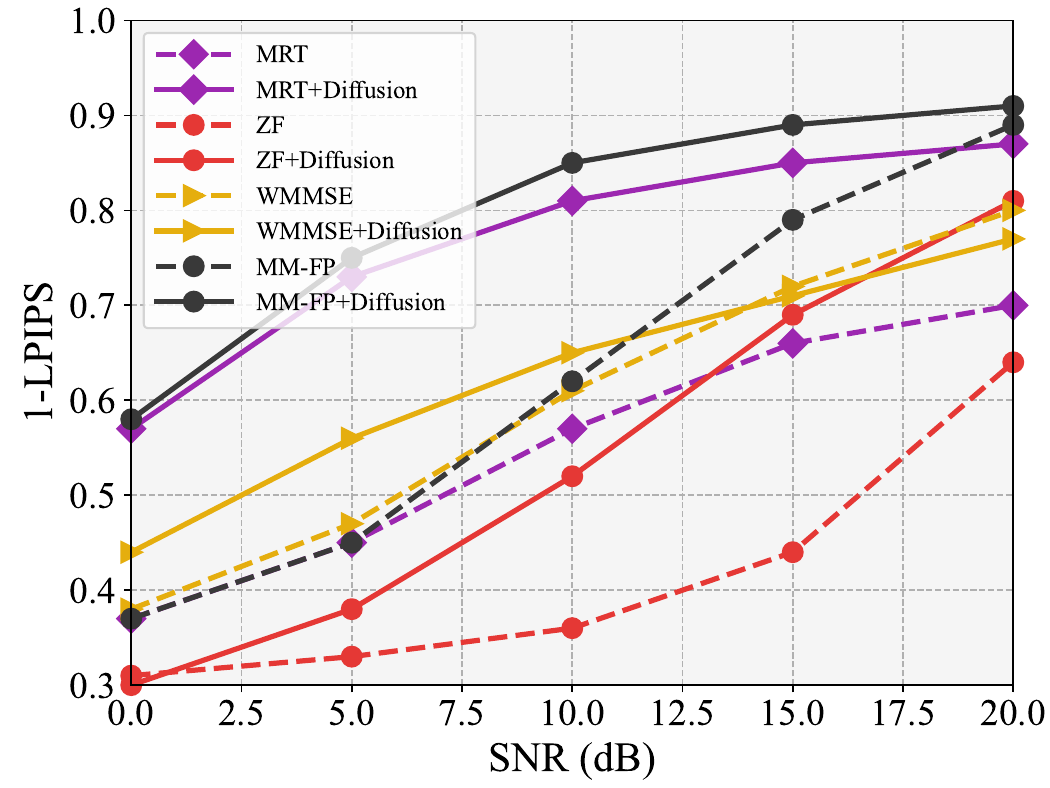}}
		\subfigure[Effectiveness validation of the proposed mapping scheme.]{
			\label{Fig: ablation study :b} 
			\includegraphics[width=0.45\linewidth]{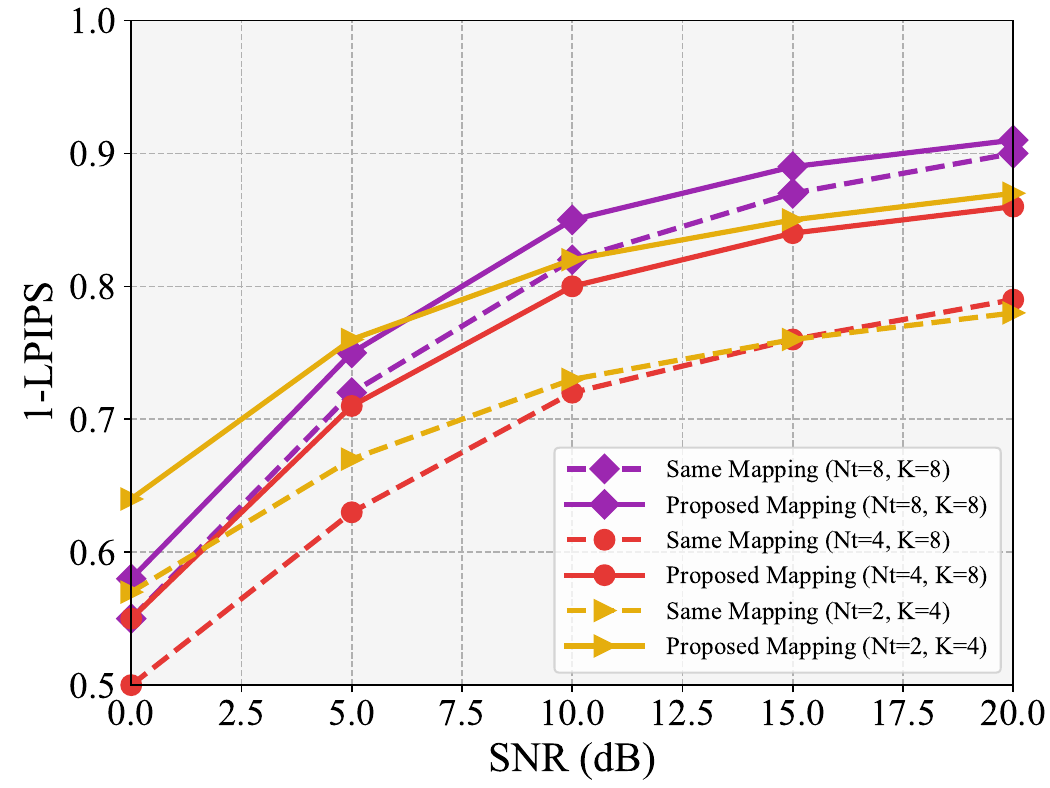}}		\vspace{0.5mm}
            \caption{Ablation study for the proposed multi-user framework.}
	\label{Fig: sematnics correlated data}
\end{figure*}
\begin{figure*}[t]
	\begin{center}
		\resizebox{0.85\textwidth}{!}
		{
\begin{tabular}{c|cccc|cccc}
\hline\thead{\Large Schemes/Users} &\thead{\Large User 1}&\thead{\Large User 2}&\thead{\Large User 3}&\thead{\Large User 4}&\thead{\Large User 1}&\thead{\Large User 2}&\thead{\Large User 3}&\thead{\Large User 4}\\\hline
\thead{\Large Origin} &
\adjustbox{valign=m}{\strut\includegraphics[width=0.2\textwidth]{./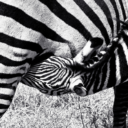}}&
\adjustbox{valign=m}{\strut\includegraphics[width=0.2\textwidth]{./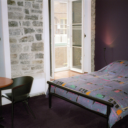}}&
\adjustbox{valign=m}{\strut\includegraphics[width=0.2\textwidth]{./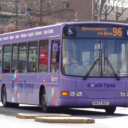}}&
\adjustbox{valign=m}{\strut\includegraphics[width=0.2\textwidth]{./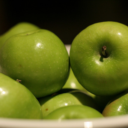}}&
\adjustbox{valign=m}{\strut\includegraphics[width=0.2\textwidth]{./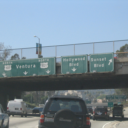}}&
\adjustbox{valign=m}{\strut\includegraphics[width=0.2\textwidth]{./}}&
\adjustbox{valign=m}{\strut\includegraphics[width=0.2\textwidth]{./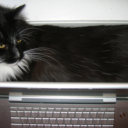}}&
\adjustbox{valign=m}{\strut\includegraphics[width=0.2\textwidth]{./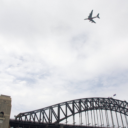}}\\
\thead{\strut\Large Same\\Mapping} &
\adjustbox{valign=m}{\strut\includegraphics[width=0.2\textwidth]{./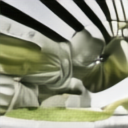}}&
\adjustbox{valign=m}{\strut\includegraphics[width=0.2\textwidth]{./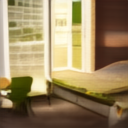}}&
\adjustbox{valign=m}{\strut\includegraphics[width=0.2\textwidth]{./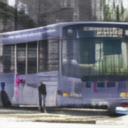}}&
\adjustbox{valign=m}{\strut\includegraphics[width=0.2\textwidth]{./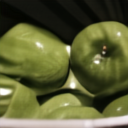}}&
\adjustbox{valign=m}{\strut\includegraphics[width=0.2\textwidth]{./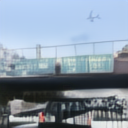}}&
\adjustbox{valign=m}{\strut\includegraphics[width=0.2\textwidth]{./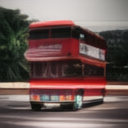}}&
\adjustbox{valign=m}{\strut\includegraphics[width=0.2\textwidth]{./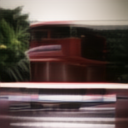}}&
\adjustbox{valign=m}{\strut\includegraphics[width=0.2\textwidth]{./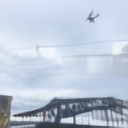}}\\
\thead{\Large Proposed Shuffle\\-based Mapping} &
\adjustbox{valign=m}{\strut\includegraphics[width=0.2\textwidth]{./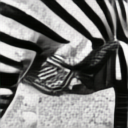}}&
\adjustbox{valign=m}{\strut\includegraphics[width=0.2\textwidth]{./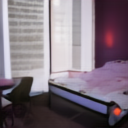}}&
\adjustbox{valign=m}{\strut\includegraphics[width=0.2\textwidth]{./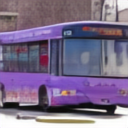}}&
\adjustbox{valign=m}{\strut\includegraphics[width=0.2\textwidth]{./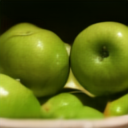}}&
\adjustbox{valign=m}{\strut\includegraphics[width=0.2\textwidth]{./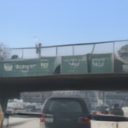}}&
\adjustbox{valign=m}{\strut\includegraphics[width=0.2\textwidth]{./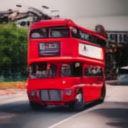}}&
\adjustbox{valign=m}{\strut\includegraphics[width=0.2\textwidth]{./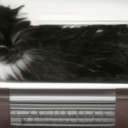}}&
\adjustbox{valign=m}{\strut\includegraphics[width=0.2\textwidth]{./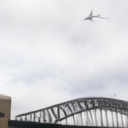}}\\\hline
\end{tabular}
}
   \end{center}
	  \caption{{Examples of reconstructed images with different mapping strategies ($N_t=2, K=4, {\rm SNR}=10$ dB).}}
	  \label{fig:example of different mapping strategies}
  \end{figure*}

We further conduct an ablation study to comprehensively evaluate the effectiveness of the key components proposed in this paper. Specifically, we compare different beamforming strategies and assess the impact of employing DM for denoising, as illustrated in Fig. \ref{Fig: ablation study :a}. In all cases, the VAEJSCC model \cite{zhang2025semantics} is adopted as the JSCC encoder-decoder for all users. The results show that, under the same beamforming strategy, applying the DM for noise reduction at the receiver (solid lines) consistently enhances performance compared to not using the DM (dashed lines), thereby confirming the necessity of diffusion-based denoising. Conversely, when the denoising approach is fixed (with or without DM), the weighted minimum mean-square error (WMMSE) scheme, widely regarded as near-optimal in conventional multi-user MISO systems, yields suboptimal performance and, in some cases, performs worse than the ZF or MRT schemes. This observation highlights the importance of incorporating semantic performance metrics into beamforming design. The proposed beamforming approach, which explicitly optimizes semantic reconstruction quality in the presence of noise and interference, is able to adaptively adjust beam directions and consistently outperforms all benchmark beamforming methods. These findings validate both the effectiveness of diffusion-based denoising and the superiority of the proposed semantic-aware beamforming scheme in multi-user semantic communication systems.

We next conduct experiments to evaluate the effectiveness of the proposed shuffle-based mapping scheme. The results, summarized in Fig. \ref{Fig: ablation study :b}, consider three different system configurations. When using the conventional mapping approach, we observe a pronounced performance degradation, particularly when the number of users $K$ exceeds the number of transmit antennas. This issue is most evident in the high SNR regime, where inter-user interference becomes the dominant impairment, severely disrupting the transmission of key semantic information and resulting in reduced LPIPS scores. In contrast, by employing the proposed shuffle-based mapping, inter-user interference is effectively transformed into additional channel noise, which can be efficiently mitigated using the diffusion model at the receiver. This approach yields significant performance improvements across all evaluated scenarios, consistently enhancing both perceptual and semantic reconstruction quality. 
A visual example is presented in Fig. \ref{fig:example of different mapping strategies} to further illustrate the impact of mapping strategies on reconstruction quality. It is observed that employing a same-mapping strategy leads to pronounced inter-user interference, which results in the degradation of critical semantic information in the reconstructed images. In contrast, the proposed shuffle-based scheme effectively transforms inter-user interference into an equivalent channel noise, thereby reducing the superposition effects at the image level. 
These results clearly validate the effectiveness of the proposed shuffle-based mapping scheme in mitigating inter-user interference in multi-user semantic communication systems.

\subsection{Performance Evaluation over Semantics-Correlated Data}
\begin{figure*}[t] 
	\centering
    \subfigure[Performance Evaluation on COCO dataset.]{
		\label{Fig: semantics correlated data :a} 
		\includegraphics[width=0.45\linewidth]{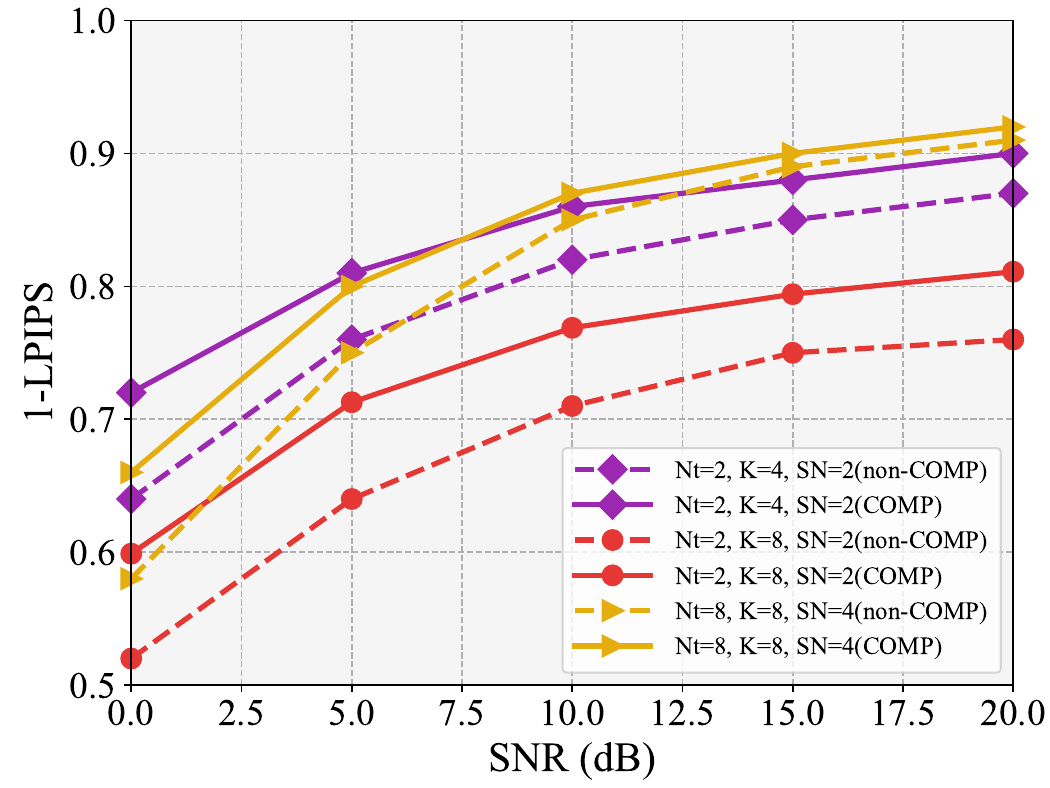}}
		\subfigure[Performance Evaluation on VIMEO dataset.]{
			\label{Fig: semantics correlated data :b} 
			\includegraphics[width=0.45\linewidth]{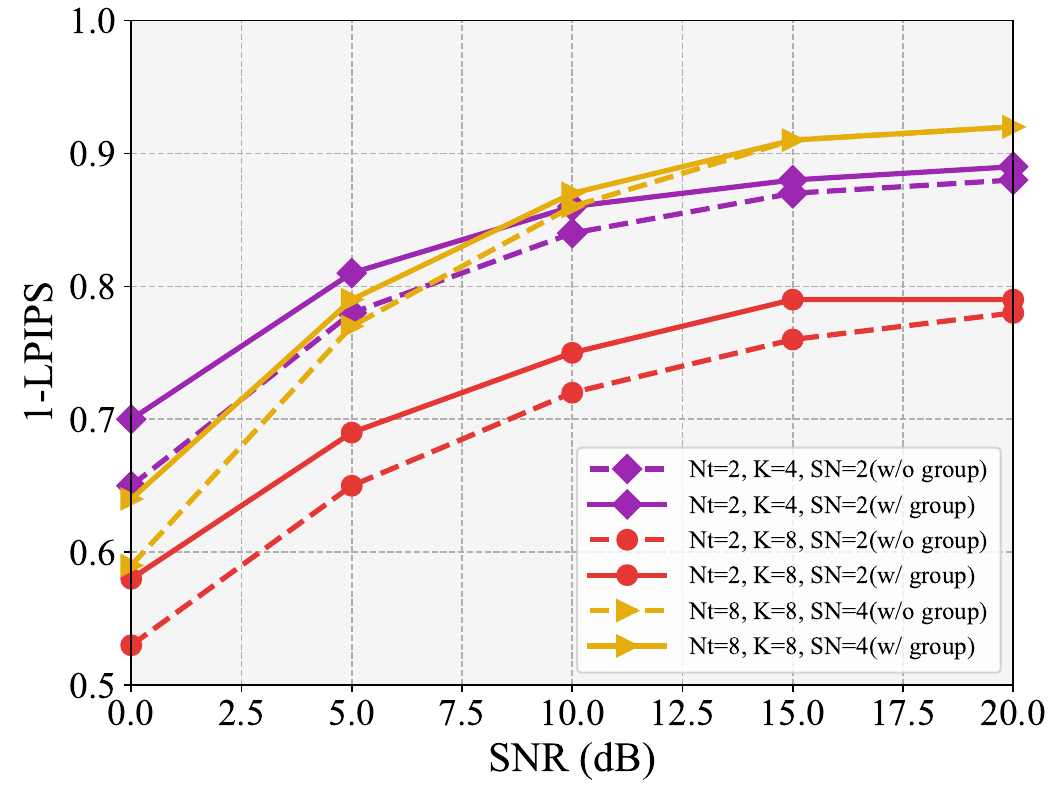}}
            \vspace{0.5mm}
	\caption{Performance Comparison in the scenarios of transmitting semantics-correlated data.}
	\label{Fig: semantics correlated data}
\end{figure*}
In this subsection, we investigate the performance of the proposed scheme in scenarios where the BS transmits semantics-correlated data to users. To simulate such conditions, we consider two representative cases. In the first case, $\frac{K}{SN}$ images are randomly selected from the COCO validation set, each duplicated $SN$ times, and the BS transmits a total of $K$ images simultaneously to users. In the second, more practical setting, the BS delivers key frames of videos to users, where $SN$ users receive frames from the same video but corresponding to different time slots. This scenario is modeled using the VIMEO dataset, which contains $90,000$ video clips with five key frames each, ordered chronologically. The experimental results are shown in Fig. \ref{Fig: semantics correlated data :a} and Fig. \ref{Fig: semantics correlated data :b}, respectively. We compare the proposed COMP scheme  where users receiving semantically similar data are grouped and assigned a unified shuffling pattern, with corresponding adjustments to the beamforming strategy, to the baseline non-COMP scheme from Section \ref{sec: transceiver design for semantics-uncorrelated data} (depicted by dashed lines), which ignores data correlation and assigns each JSCC feature a unique shuffling pattern. The results demonstrate that the COMP method, coupled with cooperative transmission and optimized beamforming, achieves notable performance gains, particularly in the low SNR regime. This advantage is evident in both the duplicated image scenario and the video key frame transmission, as shown in Fig. \ref{Fig: semantics correlated data :b}, where COMP consistently outperforms the non-COMP approach. 
This observation is further validated by the visual examples provided in Fig. \ref{fig:example of awgn reconstruction images with semantics correlated data}, where we focus on a low SNR scenario (SNR = 0 dB). Under these conditions, the non-COMP scheme fails to reconstruct images that retain the essential semantic information of the originals. In contrast, the proposed COMP scheme leverages partial interference as useful signal components, effectively enhancing overall performance. These findings confirm the robustness and adaptability of the proposed scheme across diverse multi-user transmission scenarios involving semantic correlation among transmitted data.
  \begin{figure*}[tb]
	\begin{center}
		\resizebox{0.85\textwidth}{!}
		{\begin{tabular}{c|cccc|cccc}
\hline
\multicolumn{9}{c}{\textbf{\Large COCO}}\\\hline
\thead{\Large Schemes/Users} &\thead{\Large User 1}&\thead{\Large User 2}&\thead{\Large User 3}&\thead{\Large User 4}&\thead{\Large User 1}&\thead{\Large User 2}&\thead{\Large User 3}&\thead{\Large User 4}\\\hline
\thead{\Large Origin} &
\adjustbox{valign=m}{\strut\includegraphics[width=0.2\textwidth]{./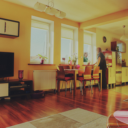}}&
\adjustbox{valign=m}{\strut\includegraphics[width=0.2\textwidth]{./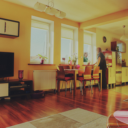}}&
\adjustbox{valign=m}{\strut\includegraphics[width=0.2\textwidth]{./}}&
\adjustbox{valign=m}{\strut\includegraphics[width=0.2\textwidth]{./}}&
\adjustbox{valign=m}{\strut\includegraphics[width=0.2\textwidth]{./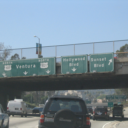}}&
\adjustbox{valign=m}{\strut\includegraphics[width=0.2\textwidth]{./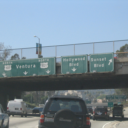}}&
\adjustbox{valign=m}{\strut\includegraphics[width=0.2\textwidth]{./}}&
\adjustbox{valign=m}{\strut\includegraphics[width=0.2\textwidth]{./}}\\
\thead{\strut\Large Non Cooperative\\Transmission} &
\adjustbox{valign=m}{\strut\includegraphics[width=0.2\textwidth]{./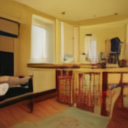}}&
\adjustbox{valign=m}{\strut\includegraphics[width=0.2\textwidth]{./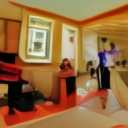}}&
\adjustbox{valign=m}{\strut\includegraphics[width=0.2\textwidth]{./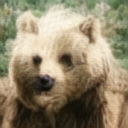}}&
\adjustbox{valign=m}{\strut\includegraphics[width=0.2\textwidth]{./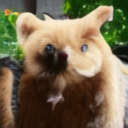}}&
\adjustbox{valign=m}{\strut\includegraphics[width=0.2\textwidth]{./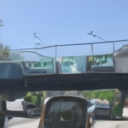}}&
\adjustbox{valign=m}{\strut\includegraphics[width=0.2\textwidth]{./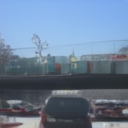}}&
\adjustbox{valign=m}{\strut\includegraphics[width=0.2\textwidth]{./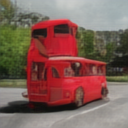}}&
\adjustbox{valign=m}{\strut\includegraphics[width=0.2\textwidth]{./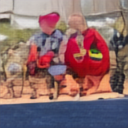}}\\
\thead{\Large Cooperative\\Transmission} &
\adjustbox{valign=m}{\strut\includegraphics[width=0.2\textwidth]{./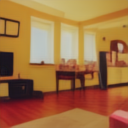}}&
\adjustbox{valign=m}{\strut\includegraphics[width=0.2\textwidth]{./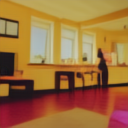}}&
\adjustbox{valign=m}{\strut\includegraphics[width=0.2\textwidth]{./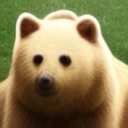}}&
\adjustbox{valign=m}{\strut\includegraphics[width=0.2\textwidth]{./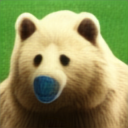}}&
\adjustbox{valign=m}{\strut\includegraphics[width=0.2\textwidth]{./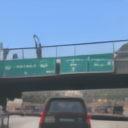}}&
\adjustbox{valign=m}{\strut\includegraphics[width=0.2\textwidth]{./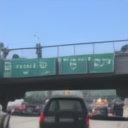}}&
\adjustbox{valign=m}{\strut\includegraphics[width=0.2\textwidth]{./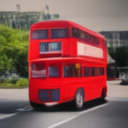}}&
\adjustbox{valign=m}{\strut\includegraphics[width=0.2\textwidth]{./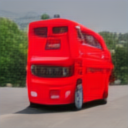}}\\
\hline
\multicolumn{9}{c}{\textbf{\Large VIMEO}}\\\hline
\thead{\Large Schemes/Users} &\thead{\Large User 1}&\thead{\Large User 2}&\thead{\Large User 3}&\thead{\Large User 4}&\thead{\Large User 1}&\thead{\Large User 2}&\thead{\Large User 3}&\thead{\Large User 4}\\\hline
\thead{\Large Origin} &
\adjustbox{valign=m}{\strut\includegraphics[width=0.2\textwidth]{./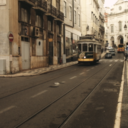}}&
\adjustbox{valign=m}{\strut\includegraphics[width=0.2\textwidth]{./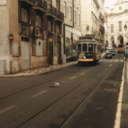}}&
\adjustbox{valign=m}{\strut\includegraphics[width=0.2\textwidth]{./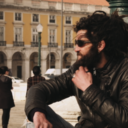}}&
\adjustbox{valign=m}{\strut\includegraphics[width=0.2\textwidth]{./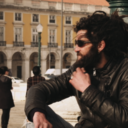}}&
\adjustbox{valign=m}{\strut\includegraphics[width=0.2\textwidth]{./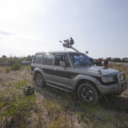}}&
\adjustbox{valign=m}{\strut\includegraphics[width=0.2\textwidth]{./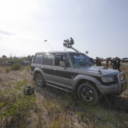}}&
\adjustbox{valign=m}{\strut\includegraphics[width=0.2\textwidth]{./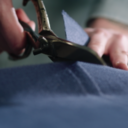}}&
\adjustbox{valign=m}{\strut\includegraphics[width=0.2\textwidth]{./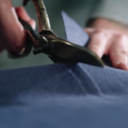}}\\
\thead{\strut\Large Non Cooperative\\Transmission} &
\adjustbox{valign=m}{\strut\includegraphics[width=0.2\textwidth]{./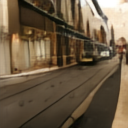}}&
\adjustbox{valign=m}{\strut\includegraphics[width=0.2\textwidth]{./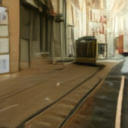}}&
\adjustbox{valign=m}{\strut\includegraphics[width=0.2\textwidth]{./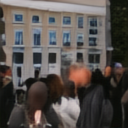}}&
\adjustbox{valign=m}{\strut\includegraphics[width=0.2\textwidth]{./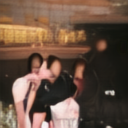}}&
\adjustbox{valign=m}{\strut\includegraphics[width=0.2\textwidth]{./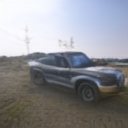}}&
\adjustbox{valign=m}{\strut\includegraphics[width=0.2\textwidth]{./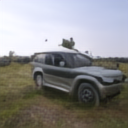}}&
\adjustbox{valign=m}{\strut\includegraphics[width=0.2\textwidth]{./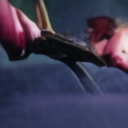}}&
\adjustbox{valign=m}{\strut\includegraphics[width=0.2\textwidth]{./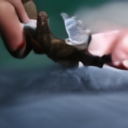}}\\
\thead{\Large Cooperative\\Transmission} &
\adjustbox{valign=m}{\strut\includegraphics[width=0.2\textwidth]{./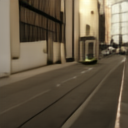}}&
\adjustbox{valign=m}{\strut\includegraphics[width=0.2\textwidth]{./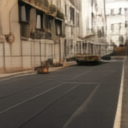}}&
\adjustbox{valign=m}{\strut\includegraphics[width=0.2\textwidth]{./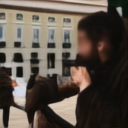}}&
\adjustbox{valign=m}{\strut\includegraphics[width=0.2\textwidth]{./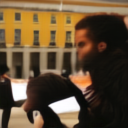}}&
\adjustbox{valign=m}{\strut\includegraphics[width=0.2\textwidth]{./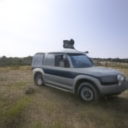}}&
\adjustbox{valign=m}{\strut\includegraphics[width=0.2\textwidth]{./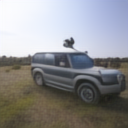}}&
\adjustbox{valign=m}{\strut\includegraphics[width=0.2\textwidth]{./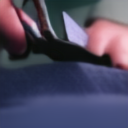}}&
\adjustbox{valign=m}{\strut\includegraphics[width=0.2\textwidth]{./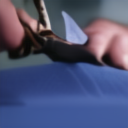}}\\\hline
\end{tabular}
}
   \end{center}
   \caption{{Examples of reconstructed images with different transmission strategies ($N_t=2, K=4, {\rm SN}=2, {\rm SNR}=0$ dB).}}
	  \label{fig:example of awgn reconstruction images with semantics correlated data}
  \end{figure*}


\section{Conclusion}\label{sec: conclusion}
In this paper, we introduced a novel multi-user SemCom transmission framework. The key innovation lies in the shuffle-based orthogonalization technique, which transforms inter-user interference into channel noise. This transformation enables the direct utilization of JSCC models originally trained for point-to-point channels, facilitating a unified approach to managing inter-user interference and channel noise. This method offers a promising, cost-effective way to achieve semantic orthogonality, which is essential for OMDMA. While this paper explores the potential of the shuffle-based method in multi-user SemCom, the optimal shuffle strategy remains to be fully developed. Future work could investigate local shuffle strategies based on data correlations, as well as identify optimal shuffle patterns that minimize user interference at the semantic level. Additionally, we envision exploring intriguing application scenarios such as full-duplex transmission and ultra-large scale access.

\bibliographystyle{ieeetr}
	\bibliography{ref}
\end{document}